\def\BibTeX{{\rm B\kern-.05em{\sc i\kern-.025em b}\kern-.08em
    T\kern-.1667em\lower.7ex\hbox{E}\kern-.125emX}}
\newcommand{\paragraphNoSkip}[1]{\par\smallskip\noindent\textbf{#1.}\hspace{-0.1em}}
\newtheorem{definition}{Definition}[section]
\newtheorem{theorem}[definition]{Theorem}
\theoremstyle{remark}
\newcommand{\step}[1]{\noindent\underline{\textsc{#1}}\par}
\newcommand{\hlineprotocol}{\par\noindent\rule{\linewidth}{0.4pt}\par}
\newlist{codelines}{enumerate}{1}
\setlist[codelines,1]{
  label=\texttt{\padzeroes[2]{\arabic*}:}, % 2-digit padded numbers like 01:, 02:
  leftmargin=2.5em,
  labelwidth=2em,
  labelsep=0.5em,
  itemsep=0pt,
  parsep=0pt,
  font=\ttfamily
}
\crefname{definition}{Definition}{Definitions}
\crefname{proposition}{Proposition}{Propositions}
\crefname{corollary}{Corollary}{Corollaries}
\crefname{example}{Example}{Examples}
\crefname{remark}{Remark}{Remarks}
\newcommand{\Rep}{\ensuremath{\mathsf{Rep}}}
\newcommand{\Setup}{\mathsf{Setup}}
\newcommand{\KeyGen}{\mathsf{KeyGen}}
\newcommand{\Sign}{\mathsf{Sign}}
\newcommand{\Verify}{\mathsf{Ver}}
\newcommand{\Aggregate}{\mathsf{Agg}}
\newcommand{\AggVerify}{\mathsf{AggVer}}
\newcommand{\Hash}{\mathsf{Hash}}
\newcommand{\Register}{\mathsf{Register}}
\newcommand{\Announce}{\mathsf{Announce}}
\newcommand{\Report}{\mathsf{Report}}
\newcommand{\Attest}{\mathsf{Attest}}
\newcommand{\Migrate}{\mathsf{Migrate}}
\newcommand{\SCRead}{\mathsf{SC.Read}}
\newcommand{\SCWrite}{\mathsf{SC.Write}}
\newcommand{\SCInit}{\mathsf{SC.Init}}
\begin{document}

\title{Persistent BitTorrent Trackers}

\author{
\IEEEauthorblockN{François-Xavier Wicht} \IEEEauthorblockA{University of Bern, IC3} \and % 0009-0005-6090-7901
\IEEEauthorblockN{Zhengwei Tong} \IEEEauthorblockA{Duke University} \and % 0009-0006-9724-0837
\IEEEauthorblockN{Shunfan Zhou} \IEEEauthorblockA{Phala Network} \and % 0009-0006-1255-4045
\IEEEauthorblockN{Hang Yin} \IEEEauthorblockA{Phala Network} \and
\IEEEauthorblockN{Aviv Yaish} \IEEEauthorblockA{Yale University, IC3,\\Complexity Science Hub Vienna} % 0000-0002-7971-2494
}

\maketitle

\begin{abstract}
Private BitTorrent trackers enforce upload-to-download ratios to prevent free-riding, but suffer from three critical weaknesses: reputation cannot move between trackers, centralized servers create single points of failure, and upload statistics are self-reported and unverifiable.
When a tracker shuts down, users lose their contribution history and cannot prove their standing to new communities.

We address these problems by storing reputation in smart contracts and replacing self-reports with cryptographic attestations.
Peers sign receipts for received pieces; the tracker aggregates them via BLS signatures and updates reputation.
If a tracker is unavailable, peers fall back to an authenticated distributed hash table (DHT): stored reputation acts as a public key infrastructure (PKI), preserving access control without the tracker.
Reputation is portable across tracker failures through single-hop migration in factory-deployed contracts.
We also address the privacy implications of publishing public keys and reputations tied to private trackers on a public ledger: we propose ephemeral session keys to prevent linking peer identities, zero-knowledge membership proofs for anonymous DHT participation, and confidential reputation using homomorphic commitments.

We formalize the security requirements, prove four security properties under standard cryptographic assumptions, and evaluate a prototype.
Measurements show that transfer receipts add less than 5\% end-to-end overhead with typical piece sizes. To minimize signing overhead, we adopt a hybrid signature scheme: ECDSA signs individual piece receipts at transfer time for low per-operation latency, while BLS serves as the overarching scheme, enabling compact aggregation of many receipts into a single proof at report time. This design reduces client-side signing cost by an order of magnitude compared to using BLS throughout.
\end{abstract}

\begin{IEEEkeywords}
p2p, file transfer, censorship resistance, persistent reputation management, distributed file exchange.
\end{IEEEkeywords}

\section{Introduction}
\label{sec:intro}
BitTorrent has become one of the most widely deployed peer-to-peer (p2p) protocols, responsible for a significant fraction of global internet traffic.
While public BitTorrent trackers allow unrestricted participation, they suffer from free-riding: users who download content without contributing equivalent uploads~\cite{adar2000free,jun2005incentives}.
Private\footnote{Following BitTorrent's standard terminology, ``private'' does not refer to privacy notions such as anonymity or unlinkability, but rather to permissioned resources (as opposed to publicly accessible ones).} trackers emerged as a community-driven solution, restricting access to users who maintain favorable upload-to-download ratios.
These systems store reputation in centralized databases, creating vibrant but fragile communities vulnerable to single points of failure.
This design hampers quick recovery from failures, as illustrated by a notable incident.
In 2007, European police agencies shut down the prominent OiNK tracker, which was called ``the world's greatest record store'' \cite{westhoff2007trent}, with its founder even named one of online music's most influential people \cite{dolan2007powergeek}.
While alternative trackers were rapidly created, admissions were often invitation-only and ex-OiNK members could not migrate their hard-earned ratios  \cite{fisher2007oinks}. 

The private tracker model exhibits three critical structural weaknesses.
\emph{First}, upload-to-download ratios are not portable across trackers.
Communities operate as isolated silos, preventing users from leveraging their contribution history when joining new trackers or recovering from shutdowns.
\emph{Second}, centralization creates fragility: trackers serve as single points of failure for both reputation storage and peer discovery.
While distributed solutions like DHTs~\cite{maymounkov2002kademlia} can handle peer discovery, they lack authentication and are thus disabled for private tracker torrents~\cite{BortnikovGKKS08}.
\emph{Third}, transfer statistics are self-reported and unverifiable.
Users can inflate ratios through false reports~\cite{brooks2009bittorrhacks}, with only ex post moderation available to detect fraud.

We redesign the private tracker architecture to eliminate these three weaknesses through blockchain-based reputation and cryptographic attestation.

\emph{First}, we persist reputation and make it portable through smart contracts that record user contributions on-chain.
Users can migrate reputation to new trackers, join federated communities, or bootstrap new tracker instances.
When a tracker shuts down, no reputation is lost: the blockchain preserves all historical contributions, enabling seamless migration to successor communities.

\emph{Second}, we eliminate both forms of centralization.
For reputation storage, smart contracts replace centralized databases, thus no single entity controls or can destroy reputation data.
The tracker posts cryptographically authenticated state transitions to the blockchain for accountability.
Should a tracker fail or become compromised, the contract enables rollback to the last consistent state.
For peer discovery, on-chain reputation serves as an authenticated allow-list, letting peers fall back on DHT-based discovery when trackers are down.
This decentralized fallback maintains access control and eliminates the single point of failure risk posed by the tracker in peer discovery.

\emph{Third}, we introduce an attestation protocol where peers cryptographically sign evidence of data transfers.
The tracker aggregates attestations, creating an auditable chain of custody for reported statistics.
Senders cannot inflate contributions without obtaining receivers' signatures, and disputes can be resolved by examining cryptographic receipts rather than relying on ex post moderation.

\emph{Optionally}, when the tracker operator cannot be trusted, the tracker software can run inside a Trusted Execution Environment (TEE).
This adds two guarantees on top of the base protocol: \emph{integrity}, ensuring the operator cannot selectively drop valid attestations to disadvantage certain peers; and \emph{confidentiality}, preventing the operator from observing peer IP addresses and activity patterns.
The TEE is not required when the operator is trusted, which is the common case in small private communities.

The result is a \emph{persistent tracker}: a censorship-resistant protocol for private content distribution that aligns with Web3 principles of decentralization, verifiability, and user sovereignty.
We formalize the security requirements, present a construction with per-piece attestation, and demonstrate how blockchain-based persistent storage combined with cryptographic mechanisms achieves robust guarantees even against powerful adversaries.

In total, we contribute:
\begin{enumerate*}
  \item A formal \emph{Persistent BitTorrent Tracker Scheme} (PBTS) with security requirements, algorithms, and proofs under standard cryptographic assumptions.
  \item A construction adding verifiable per-piece attestation to BitTorrent, replacing self-reported statistics with cryptographically checked transfers.
  \item Optimizations for attestation to reduce signing cost, bandwidth, and verification overhead.
  \item A portable reputation system using factory-based smart contracts where new trackers inherit state from predecessors through single-hop migration.
  \item An authenticated DHT fallback using on-chain reputation as PKI, maintaining access control when trackers are unavailable, and an optional TEE-based mode that extends the base protocol with operator-integrity and peer-privacy guarantees.
  \item A privacy analysis and set of mitigations: ephemeral session keys to prevent IP-to-identity linking in swarms, zero-knowledge membership proofs for anonymous DHT participation, and confidential reputation via homomorphic commitments.
  \item An implementation and evaluation showing that PBTS achieves strong security guarantees with less than 5\% throughput overhead for typical workloads, and that a hybrid ECDSA/BLS signing scheme reduces client-side signing cost by an order of magnitude for high-bandwidth transfers.
\end{enumerate*}

The paper is structured as follows.
\Cref{section:RelatedWork} surveys related work.
\Cref{sec:prelims} provides background on the BitTorrent protocol and establishes the needed notation and cryptographic primitives.
\Cref{sec:Persistent BitTorrent Tracker} formally defines the Persistent BitTorrent Tracker Scheme, presents our construction with per-piece attestation, factory-based smart contracts for portable reputation, and authenticated DHT fallback for tracker-less operation.
\Cref{sec:security} analyzes the security properties of the construction.
\Cref{sec:privacy} analyzes privacy risks and proposes mitigations including ephemeral session keys, ZK membership proofs, and confidential reputation.
\Cref{sec:implementation} describes our implementation and evaluates the prototype.
Finally, we conclude in \Cref{section:Conclusion}.

\section{Related work}
\label{section:RelatedWork}

PBTS sits at the intersection of four research areas. Incentive design in p2p file-sharing motivates the problem; reputation and identity management inform our on-chain design; censorship-resistant distributed systems shape our persistence and migration mechanisms; and trusted execution underpins the optional integrity guarantees.

\paragraphNoSkip{Fairness in file-sharing}
Ensuring fairness is a long-standing challenge in open p2p systems, e.g., preventing ``free-riding'' by users \cite{adar2000free}, that is, users who only download content without uploading at least an equivalent amount to others.
Some have proposed mitigating this issue by requiring micro-payments for downloads, possibly by protocol-specific currencies \cite{golle2001incentives}.
BitTorrent attempts to address this via an optional ``choking'' protocol where a user may temporarily refuse to upload to peers who do not reciprocate \cite{cohen2003incentives}.
The analysis of \citeauthor{wu2007proportional} \cite{wu2007proportional} shows that such tit-for-tat protocols quickly converge to an efficient equilibrium: bandwidth is optimally allocated.

\paragraphNoSkip{Private trackers}
Private trackers emerged as a community-driven solution to free-riding, introducing admission-control and a reputation layer based on upload-to-download ratios to enforce sharing norms \cite{kash2012economics}.
Thus, communities typically only admit new users with a good upload-to-download ratio in other communities, and kick out existing users who do not maintain a good ratio.
The study of \citeauthor{hales2009bittorrent} \cite{hales2009bittorrent} identifies potential ``credit squeezes'' where a lack of upload opportunities can stifle participation in private trackers.
While some propose to improve fairness by applying economic inequality measures (e.g., the Gini coefficient) to file-sharing communities \cite{rahman2010improving}, recent work finds that such measures are inaccurate in pseudonymous settings \cite{yaish2025inequality}.

\paragraphNoSkip{Sybil attacks in BitTorrent}
A notable issue that may arise in BitTorrent communities is that actors can fake their upload-to-download ratio, whether by falsely reporting uploads \cite{brooks2009bittorrhacks}, or by creating multiple identities who upload and download from each other.
The latter manipulation is part of a broader class of so-called \emph{Sybil} attacks that involve the creation of ``fake'' identities \cite{douceur2002sybil}.
While BitTorrent's tit-for-tat is resistant to some manipulations \cite{cheng2023truthfulness}, it is vulnerable to Sybil attacks \cite{levin2008bittorrent}.
\citeauthor{cheng2022study} \cite{cheng2022study} show that the gain that can be obtained by such attacks equals at most three times the amount of data that could be downloaded honestly, with a tight bound of two obtained later by \citeauthor{cheng2024tight} \cite{cheng2024tight}.
Prior work analyzes the economic impact of file-sharing services on a market \cite{minniti2010turning}, and how such services should be priced \cite{li2021optimal}, with later work showing that when incentives are not correctly aligned, Sybil attacks may drain victim resources \cite{halaburda2025platform}.

\paragraphNoSkip{Rethinking BitTorrent's incentives}
Market-based solutions for BitTorrent's vulnerability to Sybil attacks and other manipulations have been explored by prior work.
For example, \citeauthor{levin2008bittorrent} \cite{levin2008bittorrent} provide an elegant analogy: from the perspective of a user, the peers competing for its upload bandwidth are participating in an auction.
The authors find that allocating upload bandwidth proportionally to the incoming bandwidth received from each peer is nearly an equilibrium.
That is, deviating from this strategy is nearly unprofitable, given that all other peers are following the rules.
A different design is offered by \citeauthor{zohar2009adding} \cite{zohar2009adding}, where, by default, peers cannot request specific data blocks, but rather a range from which data blocks are chosen at random.
A user can make specific requests to a peer only in exchange for fully providing blocks asked for by that peer.

\paragraphNoSkip{Reputation management}
A notable line of work proposed systems to persist and manage reputation.
One prominent design for such systems is based on the non-transferable ``soulbound tokens'' of \citeauthor{ohlhaver2022decentralized} \cite{ohlhaver2022decentralized}, which can be used to represent identity, and, by extension, reputation.
The authors emphasize the importance of having a recovery mechanism in place, to assist those who for whatever reason lost control of their tokens (e.g., due to losing the secret key corresponding to the account holding the tokens).
We highlight another crucial recovery notion, for the case where the system itself is compromised.
Alternative reputation management systems similarly lack recovery functionality of this sort, such as the UniRep protocol \cite{unirep2025unirep}.
A general ZKP-based design offering functionality similar to UniRep's is provided by \citeauthor{buterin2022some} \cite{buterin2022some}.
A framework called zk-promises is put forth by \citeauthor{shih2024zk} \cite{shih2024zk}, which re-purposes methods used by privacy-preserving cryptocurrency protocols to endow private reputation systems with moderation capabilities (e.g., to enable blacklisting accounts).

\paragraphNoSkip{Persistent systems}
A main goal of ours is to provide takedown resistance for BitTorrent trackers.
Previous work did not consider this threat model, mostly focusing on network-level interference.
For example, \citeauthor{bocovich2024snowflake} \cite{bocovich2024snowflake} devise an internet-censorship circumvention system which relies on rapidly setting up numerous temporary proxies, which, due to their sheer number, are harder to block.
To reduce the costs associated with launching such proxies, \citeauthor{kon2024spotproxy} \cite{kon2024spotproxy} present a service called SpotProxy which continuously searches for cheap hosting providers, and, if indeed found, deploys new proxy instances and migrates clients to them.
In comparison to our work, SpotProxy relies on a central controller to save client registration details and handle migration.

\paragraphNoSkip{TEEs}
Previous work employed TEEs to design robust systems in other settings and with different objectives than ours.
\citeauthor{DBLP:conf/eurosp/ChengZKHHJJ0S19} \cite{DBLP:conf/eurosp/ChengZKHHJJ0S19} introduce Ekiden, a platform that separates smart contract execution from consensus by running contracts inside SGX enclaves, achieving confidentiality without sacrificing verifiability; \citeauthor{DBLP:journals/popets/LiWWGR22} \cite{DBLP:journals/popets/LiWWGR22} systematize this line of work and compare TEE-backed confidential smart contract designs across key management, attestation, and liveness assumptions.
\citeauthor{zhang2016town} \cite{zhang2016town} extend this approach to authenticated data-feeds for smart contracts, and \citeauthor{maram2021candid} \cite{maram2021candid} use TEE-backed data-feeds for identity: credentials are scraped from websites and transferred on-chain in a trustworthy manner, with MPC used to deduplicate imported credentials and prevent Sybils.
On the systems side, \citeauthor{druschel2001largescale} \cite{druschel2001largescale} use trusted hardware for persistent storage by replicating files across multiple nodes; more recently, \citeauthor{DBLP:conf/sp/ZhuH0WCZWZYZM20} \cite{DBLP:conf/sp/ZhuH0WCZWZYZM20} scale this to rack-level confidential computing using heterogeneous TEEs, and \citeauthor{DBLP:conf/ccs/KuvaiskiiSQXBV24} \cite{DBLP:conf/ccs/KuvaiskiiSQXBV24} demonstrate that unmodified Linux workloads can run inside Intel TDX with low overhead via a lightweight OS kernel.

\section{Preliminaries}
\label{sec:prelims}

We establish notation, review the BitTorrent protocol, and define the cryptographic primitives our construction uses.

\paragraphNoSkip{Notations}
We write $x \sample S$ to denote sampling $x$ uniformly at random from set $S$. A function $\nu: \mathbb{N} \to \mathbb{R}$ is \emph{negligible} if for every polynomial $p(\cdot)$, there exists $N \in \mathbb{N}$ such that for all $n > N$, $\left|\nu(n)\right| < 1/p(n)$.
We write $[n]$ to denote the set $\{1,\ldots,n\}$.

We define reputation as a function $\Rep: \mathbb{R}_{\geq 0} \times \mathbb{R}_{\geq 0} \to \mathbb{R}$ mapping an account's total uploaded and downloaded data to a numerical score. A common instantiation is the sharing ratio $\rho = \frac{\text{uploaded}}{\text{downloaded}}$ (with $\rho = \infty$ when downloaded $= 0$), though trackers may implement alternative reputation functions (e.g., crediting upload contributions more heavily, or incorporating time-weighted statistics). 

\paragraphNoSkip{BitTorrent protocol and tracker architecture}
BitTorrent is a p2p protocol for file distribution. A file is divided into fixed-size pieces, which peers exchange until the full content is reconstructed. Clients advertise which pieces they hold and prioritize peers who reciprocate, enforcing tit-for-tat incentives~\cite{cohen2003incentives}.
Each torrent is described by a \texttt{.torrent} file containing metadata, including the file length, piece hashes, and tracker URLs. Piece hashes guarantee content integrity, while the tracker maps torrent identifiers (infohashes) to active peers. When queried, it returns a random subset of peers for the client to contact.
Once peers discover each other, they exchange handshakes and begin transferring data. Each peer decides locally whom to upload to, based on choking heuristics. 
\emph{Public} trackers permit unrestricted access, while \emph{private} trackers bind user accounts to credentials and enforce upload/download quotas~\cite{5569968}. Account accumulate reputation, typically a download-to-upload ratio. Clients periodically report their statistics to the tracker, which stores them in a centralized database.
These reports are unauthenticated and rely on client honesty. Moderators may intervene to detect fraud, but audits are manual and retrospective.

\paragraphNoSkip{Cryptography}
Our construction relies on three building blocks: aggregatable signatures for efficient batching of peer attestations, smart contracts for persistent reputation storage, and trusted execution environments for authenticated tracker operation.
We formalize each one below.
We also write $\mathsf{Com}(s; r)$ to denote a computationally binding and hiding commitment to value $s$ with randomness $r$, instantiated by any standard commitment scheme.

\begin{definition}[Aggregatable signature scheme]
\label{def:signature}
An aggregatable signature scheme $\Sigma$ consists of five algorithms:
\begin{itemize}
  \item $\KeyGen(1^\lambda) \to (\mathsf{sk}, \mathsf{pk})$: Takes a security parameter $\lambda$ and outputs a key pair consisting of a secret signing key $\mathsf{sk}$ and a public verification key $\mathsf{pk}$.
  \item $\Sign(\mathsf{sk}, m) \to \sigma$: Takes a secret key $\mathsf{sk}$ and a message $m \in \{0,1\}^*$, and outputs a signature $\sigma$.
  \item $\Verify(\mathsf{pk}, m, \sigma) \to \{0,1\}$: Takes a public key $\mathsf{pk}$, a message $m$, and a signature $\sigma$, and outputs $1$ if the signature is valid, or $0$ otherwise.
  \item $\Aggregate(\{(\mathsf{pk}_i, m_i, \sigma_i)\}_{i \in [n]}) \to \sigma_{\mathsf{agg}}$: Takes a set of public keys, messages, and signatures, and outputs an aggregate signature $\sigma_{\mathsf{agg}}$.
  \item $\AggVerify(\{(\mathsf{pk}_i, m_i)\}_{i \in [n]}, \sigma_{\mathsf{agg}}) \to \{0,1\}$: Takes a set of public key-message pairs and an aggregate signature, and outputs $1$ if the aggregate signature is valid for all pairs, or $0$ otherwise.
\end{itemize}
The signature scheme is required to satisfy \emph{correctness}: for all $(\mathsf{sk}, \mathsf{pk}) \leftarrow \KeyGen(1^\lambda)$ and all messages $m$, then $\Verify(\mathsf{pk}, m, \Sign(\mathsf{sk}, m)) = 1$. For aggregate signatures, if each signature $\sigma_i = \Sign(\mathsf{sk}_i, m_i)$ is valid, then: $\AggVerify(\{(\mathsf{pk}_i, m_i)\}_{i \in [n]}, \Aggregate(\{(\mathsf{pk}_i, m_i, \sigma_i)\}_{i \in [n]})) = 1.$

We require the following security properties.

\emph{Strong unforgeability under chosen message attacks} (sUF-CMA): for any PPT adversary $\mathcal{A}$ given $\mathsf{pk}$ and oracle access to $\Sign(\mathsf{sk}, \cdot)$,
\[
  \Pr\bigl[\Verify(\mathsf{pk}, m^*, \sigma^*) = 1 \;\land\; (m^*, \sigma^*) \notin \mathcal{Q}\bigr] \leq \negl,
\]
where $\mathcal{Q}$ is the set of message-signature pairs returned by the oracle. This is strictly stronger than EUF-CMA: the adversary cannot produce a new valid signature even on a previously queried message.

\emph{Aggregate unforgeability}: for any PPT adversary $\mathcal{A}$ with access to signing oracles $\{\Sign(\mathsf{sk}_i, \cdot)\}_{i \in [n]}$,
\begin{align*}
  \Pr\bigl[&\AggVerify(\{(\mathsf{pk}_i, m_i)\}_{i\in[n]},\, \sigma_{\mathsf{agg}}) = 1 \\
           &\;\land\; \exists\, i : m_i \notin \mathcal{Q}_i\bigr] \leq \negl,
\end{align*}
where $\mathcal{Q}_i$ is the message set queried to oracle $i$.
No adversary can produce a valid aggregate with a message-key pair for which it did not obtain a valid signature.

Standard instantiations include BLS signatures~\cite{BonehLS01} over pairing-friendly elliptic curves (e.g., BLS12-381~\cite{bls12-381}), which support efficient signature aggregation~\cite{BonehDN18,bls-aggregation}. 
\end{definition}

Smart contracts~\cite{buterin2013ethereum} serve as the persistent storage layer for reputation data, replacing centralized databases with blockchain-based state that survives tracker failures.
We model smart contracts as programs with authenticated write access and public read access.

\begin{definition}[Smart contract]
\label{def:smart-contract}
A smart contract is a blockchain-deployed deterministic program that maintains persistent state and is invoked by transactions. Our construction requires the following contract operations:

\begin{itemize}
  \item $\SCInit(\mathsf{params}) \to \mathsf{addr}$: Deploys a new contract with initialization parameters $\mathsf{params}$ and returns its on-chain address $\mathsf{addr}$.
  \item $\SCRead(\mathsf{addr}, \mathsf{key}) \to \mathsf{value}$: Reads the value associated with key $\mathsf{key}$ from the contract at address $\mathsf{addr}$. This operation is publicly accessible and does not modify the contract's state.
  \item $\SCWrite(\mathsf{addr}, \mathsf{key}, \mathsf{value}, \mathsf{auth}) \to \{\mathsf{success}, \bot\}$: Writes $\mathsf{value}$ to the contract at address $\mathsf{addr}$ under key $\mathsf{key}$, authenticated by $\mathsf{auth}$. The contract enforces access control: only authorized entities (e.g., the tracker's TEE) can modify state. Returns $\mathsf{success}$ if the write succeeds, or $\bot$ if authorization fails.
\end{itemize}

Smart contracts guarantee \emph{integrity}: state transitions are validated by consensus among blockchain nodes, ensuring that unauthorized modifications are rejected. They also provide \emph{persistence}: once written, data remains immutable and accessible as long as the blockchain operates. In our construction, we use a factory pattern where a single factory contract deploys multiple reputation contracts, each maintaining user statistics for a tracker instance.

More generally, any \emph{bulletin board} primitive that satisfies the following three properties can instantiate our construction in place of a smart contract: \emph{(i) liveness}, meaning honest writes are eventually committed and readable; \emph{(ii) write integrity}, meaning only authorized parties can append or modify entries and committed entries cannot be altered retroactively; and \emph{(iii) censorship resistance}, meaning no single party can prevent a valid write from being recorded. Smart contracts on a sufficiently decentralized blockchain are a natural instantiation, but other designs (e.g., a permissioned ledger operated by a federation of trackers) may satisfy these properties with different trust and cost trade-offs.
\end{definition}

When the tracker operator is trusted (e.g., a well-known community administrator), no additional hardware is required beyond a standard server.
If the operator is not trusted, a TEE provides two concrete benefits: \emph{integrity} (the operator cannot selectively discard valid attestations to disadvantage specific peers) and \emph{confidentiality} (peer IP addresses and activity patterns are hidden from the operator).
These guarantees come at the cost of a hardware trust assumption on the TEE manufacturer.

\begin{definition}[Trusted execution environment]
\label{def:tee}
A \emph{Trusted Execution Environment} (TEE) is a secure area within a processor that provides isolated execution for sensitive code and data. TEEs guarantee confidentiality (data is inaccessible to the host OS), integrity (code cannot be tampered with), and attestation (cryptographic proofs that specific code runs in a genuine TEE).

\end{definition}
Modern VM-level TEE solutions such as Intel TDX~\cite{cheng2024intel} and AMD SEV~\cite{amd2020secure} let unmodified applications run in isolated virtual machines. The attestation mechanism lets third parties cryptographically verify any system component through measurement registers that capture build-time and runtime measurements of the executed code. 

While TEEs provide strong confidentiality and integrity guarantees, they do not guarantee liveness or availability. The host system retains control over resource scheduling, TEE initialization, and system call execution.

Throughout the work, algorithm boxes that optionally execute in a TEE when the operator is untrusted are distinguished by a darker background and a bold border, whereas plain boxes use a lighter background and a thin border.
Code running in a TEE implicitly outputs a certificate of correct execution (attestation).
In trusted-operator deployments these algorithms run as ordinary server code.

\paragraphNoSkip{Threat model}
\label{subsec:threat-model}
The blockchain is trusted for smart contract execution. Peers and DHT nodes are \emph{untrusted}.
We consider two trust settings for the tracker operator.
\emph{Trusted operator} (base model): the operator runs the tracker software honestly; the system delivers portability and verifiable receipts but provides no protection against a dishonest operator.
\emph{Untrusted operator} (TEE-enhanced model): the operator may attempt to manipulate reputation or observe peer data; a TEE (Intel TDX/AMD SEV), trusted for isolation and attestation, prevents tampering with execution and reading of sensitive data. The operator can still jeopardize liveness by terminating instances or denying resources.
The private tracker employs a Sybil-resistant registration mechanism, such as interviews or accountable sponsorship where sponsors are held responsible for invitees' behavior (common in private trackers like RED~\cite{redinterview}), to limit initial account creation.
While peers cannot forge receipts cryptographically, colluding peers could exchange real data to generate legitimate receipts.
However, this has no net benefit: downloaders always record a reputation loss, and accountable sponsorship makes creating accounts costly since sponsors risk penalties for invitees' misconduct.

\section{Persistent BitTorrent tracker system}
\label{sec:Persistent BitTorrent Tracker}

We now introduce our formal specification and construction for the Persistent BitTorrent Tracker System (PBTS).

\subsection{Formal specification}

PBTS extends the traditional tracker interface with multiple algorithms. $\Setup$ and $\KeyGen$ initialize the system and generate user keys. $\Register$ creates accounts authenticated by signatures. $\Announce$ provides peer discovery with reputation-based access control. $\Report$ submits verified transfer statistics backed by aggregated cryptographic receipts. $\Attest$ and $\Verify$ implement p2p attestation, where receivers sign acknowledgments of transfers. $\Migrate$ enables reputation portability by creating new tracker instances that inherit state from predecessors.

\begin{definition}[Persistent BitTorrent tracker scheme]
\label{def:pbts}
A \emph{Persistent BitTorrent Tracker Scheme} (PBTS) is a tuple of eight algorithms
% $\sloppy \mathsf{PBTS} = (\Setup, \KeyGen, \Register, \Announce, \Report, \Attest, \Verify, \Migrate)$,
% with algorithms 
defined as follows:
\begin{itemize}
  \item $\Setup(1^\lambda, \mathsf{MinRep}, \mathsf{InitCredit}, W, \Delta) \to \mathsf{pp}$:
    Takes security parameter $\lambda$, minimum reputation threshold $\mathsf{MinRep}$, initial upload credit $\mathsf{InitCredit}$ for new users, epoch width $W$, and epoch acceptance window $\Delta$, and outputs public parameters $\mathsf{pp}$ including tracker instance ID $\mathsf{iid}$. The public parameters $\mathsf{pp}$ are implicit input to the remaining algorithms.
  \item $\KeyGen() \to (\mathsf{sk}, \mathsf{pk})$:
    Generates user key pair.
  \item $\Register(\mathsf{uid}, \mathsf{pk}, \sigma, \mathsf{params}) \to \{0, 1\}$:
    Registers user with user ID $\mathsf{uid}$, public key $\mathsf{pk}$, and signature $\sigma$ over registration message including instance ID and user ID, with optional parameters $\mathsf{params}$. Returns $1$ if registration succeeds, $0$ otherwise.
  \item $\Announce(\mathsf{uid}, \mathsf{pk}, \sigma, \mathsf{tid}, \mathsf{event}) \to \mathcal{P}$:
    Announces torrent $\mathsf{tid}$ with user ID $\mathsf{uid}$, public key $\mathsf{pk}$, signature $\sigma$ for authentication, and event type $\mathsf{event} \in \{\mathtt{started}, \mathtt{stopped}, \mathtt{completed}, \mathtt{none}\}$, returning peer list $\mathcal{P}$.
  \item $\Report(\mathsf{uid}, \mathsf{pk}, \{\mathsf{pk}_j\}_{j \in \mathcal{J}}, \mathcal{T}, \{t_j\}_{j \in \mathcal{J}}, \sigma_{\mathsf{agg}}, \Delta_{\mathsf{up}}) \to \{0, 1\}$:
    Reports upload statistics with user ID $\mathsf{uid}$, user's public key $\mathsf{pk}$, set of peer public keys $\{\mathsf{pk}_j\}_{j \in \mathcal{J}}$ who provided receipts, torrent metadata $\mathcal{T}$, timestamps $\{t_j\}_{j \in \mathcal{J}}$ for each receipt, aggregated signature $\sigma_{\mathsf{agg}}$, and upload delta $\Delta_{\mathsf{up}}$. The tracker reconstructs receipts, verifies the aggregate signature, credits the reporter's upload, and credits each receipt signer's download counter directly from the receipts. Returns $1$ if accepted, $0$ otherwise.
  \item $\Attest(\mathsf{sk}_{\mathsf{receiver}}, \mathsf{pk}_{\mathsf{sender}}, p_i, \mathcal{T}, t_{\text{epoch}}) \to \sigma_{\mathsf{receipt}}$:
    Generates cryptographic receipt for piece transfer, where receiving peer signs acknowledgment for sending peer. Takes receiver's secret key, sender's public key, piece $p_i$, torrent metadata $\mathcal{T} = (h_{\mathcal{T}}, [h_1, \ldots, h_n])$ containing infohash and piece hashes, and epoch timestamp $t_{\text{epoch}}$. Returns receipt signature binding infohash, sender's public key, piece hash, piece index, and epoch.
  \item $\Verify(\mathsf{pk}_{\mathsf{receiver}}, \mathsf{pk}_{\mathsf{sender}}, p_i, \mathcal{T}, t_{\text{epoch}}, \sigma_{\mathsf{receipt}}) \to \{0, 1\}$:
    Verifies cryptographic receipt by checking receiver's signature and piece integrity against $\mathcal{T}$. Returns $1$ if valid and $0$ otherwise. Valid receipts prove transfers from sender to receiver. Peers use this algorithm with piece data $p_i$ to verify receipts locally, while the tracker only needs the piece hash $h_i$ and index $i$ from $\mathcal{T}$ for signature verification.
  \item $\Migrate(\mathsf{addr}_{\mathsf{old}}, \pi) \to \mathsf{addr}_{\mathsf{new}}$:
    Migrates reputation from old contract address $\mathsf{addr}_{\mathsf{old}}$ using migration proof $\pi$, returning new contract address $\mathsf{addr}_{\mathsf{new}}$.
\end{itemize}
\end{definition}

\subsection{Construction}
\label{subsec:construction}

We now instantiate the PBTS scheme using BLS signatures for receipt aggregation and smart contracts for reputation storage. In the untrusted-operator variant, tracker algorithms additionally run inside a TEE.
\Cref{fig:arch} shows the system architecture.

\begin{figure}[ht!]
  \centering
  \includegraphics[width=\linewidth]{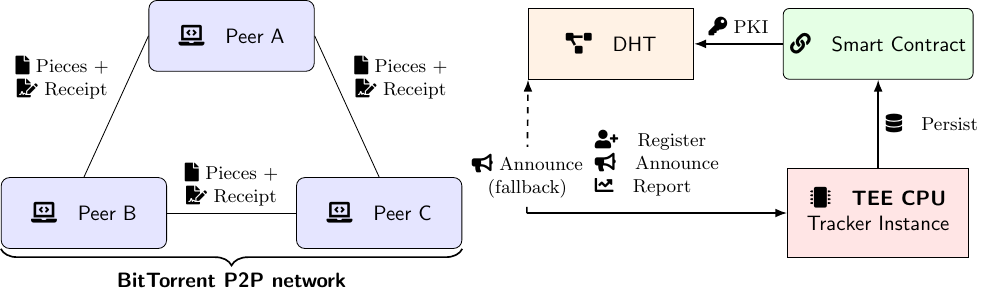}
  \caption{Architecture of the censorship-resistant tracker system. Peers (A, B, C) form a distributed swarm and exchange file pieces directly using the standard BitTorrent protocol. During file transfer, sending and receiving peers exchange cryptographic receipts. Peers register by proving public key ownership, announce torrents to retrieve peer lists with reputation-based access control, and report upload/download statistics with aggregated receipts. The tracker periodically writes reputation data to a smart contract on the blockchain, providing persistent and verifiable reputation storage. In the untrusted-operator variant (shaded), tracker operations execute inside a TEE: this prevents operator manipulation of reputation and protects peer IP addresses from the operator, at the cost of a hardware trust assumption.}
  \label{fig:arch}
\end{figure}

\paragraphNoSkip{Initialization}
Trackers are initialized by running $\Setup$ to generate unique instance identifiers and system parameters, while users generate key pairs via $\KeyGen$ for authentication and signing, as formalized in \cref{fig:pbts-setup}.

\begin{figure}[ht!]
  \begin{tcolorbox}[protocoltee={Setup}]

    \step{$\Setup(1^\lambda, \mathsf{MinRep}, \mathsf{InitCredit}, W, \Delta)$:}
    \begin{codelines}
      \item $\mathsf{iid} \sample \{0,1\}^\lambda$
      \item $\mathsf{pp} \gets (\lambda, \mathsf{iid}, \mathsf{MinRep}, \mathsf{InitCredit}, W, \Delta)$
      \item \textbf{return } $\mathsf{pp}$
    \end{codelines}

\end{tcolorbox}
  \begin{tcolorbox}[protocol={Key generation}]
    \step{$\KeyGen()$:}
    \begin{codelines}
      \item $(\mathsf{sk}, \mathsf{pk}) \gets \Sigma.\KeyGen(1^\lambda)$
      \item \textbf{return } $(\mathsf{sk}, \mathsf{pk})$
    \end{codelines}

  \end{tcolorbox}
  \caption{PBTS initialization. $\Setup$ generates tracker instance ID and system parameters ($\mathsf{MinRep}$, $\mathsf{InitCredit}$, $W$, $\Delta$). $\KeyGen$ generates key pairs for an aggregatable signature scheme.}
  \label{fig:pbts-setup}
\end{figure}

\paragraphNoSkip{User registration}
Users register by proving ownership of their public key through a signature over a registration message containing the atom $\mathtt{register}$, their user ID, and the tracker instance ID. The tracker verifies the signature and checks that the user ID is not already registered by reading from the smart contract. If the user ID already exists, registration is rejected to prevent duplicate accounts. Otherwise, the tracker writes the user's ID, public key, and initial reputation counters to the contract. New users receive an initial upload credit $\mathsf{InitCredit}$ to bootstrap participation, with zero downloads. Registration succeeds only if the contract write operation succeeds. Subsequent operations authenticate users via signatures over operation-specific messages using the registered public key. Cryptographic attestation ensures that only legitimate tracker instances can modify reputation data. The registration algorithm is specified in \cref{fig:pbts-register}.

\begin{figure}[ht!]
  \begin{tcolorbox}[protocoltee={User registration}]

    \step{$\Register(\mathsf{uid}, \mathsf{pk}, \sigma, \mathsf{params})$:}
    \begin{codelines}
      \item $m \gets (\mathtt{register} \parallel \mathsf{iid} \parallel \mathsf{uid})$
      \item \textbf{if } $\Sigma.\Verify(\mathsf{pk}, m, \sigma) = 0$ \textbf{ then return } $0$
      \item $\mathsf{existing} \gets \SCRead(\mathsf{addr}, \mathsf{uid})$
      \item \textbf{if } $\mathsf{existing} \neq \bot$ \textbf{ then return } $0$
      \item $\mathsf{result} \gets \SCWrite(\mathsf{addr}, \mathsf{pk}, (\mathsf{uid}, \mathsf{InitCredit}, 0),\mathsf{auth}_{\mathsf{TEE}})$

      \item \textbf{if } $\mathsf{result} = \bot$ \textbf{ then return } $0$
      \item \textbf{return } $1$
    \end{codelines}

  \end{tcolorbox}
  \caption{User registration with signature verification and on-chain state initialization. The tracker verifies ownership of the public key and writes user credentials to the smart contract with initial reputation counters.}
  \label{fig:pbts-register}
\end{figure}

\paragraphNoSkip{Torrent announcement and peer discovery}
When a peer wishes to participate in a torrent swarm, it announces to the tracker using the $\Announce$ algorithm (\cref{fig:pbts-announce}). The tracker maintains internal state $\mathcal{S}_{\mathsf{tid}}$ for each torrent, storing the set of active peers. The tracker first verifies the peer's signature, then reads the peer's upload and download statistics from the smart contract using their user ID and computes their reputation score. If the peer is starting a new download and their reputation falls below the minimum threshold, access is denied. Otherwise, the tracker updates its internal swarm state: removing the peer if they are stopping, or adding their IP address and port if they are joining or continuing. The tracker then samples a random subset of active peers uniformly at random from the swarm and returns this list to the announcing peer.

\begin{figure}[ht!]
  \begin{tcolorbox}[protocoltee={Torrent announcement}]

    \step{$\Announce(\mathsf{uid}, \mathsf{pk}, \sigma, \mathsf{tid}, \mathsf{event})$:}
    \begin{codelines}
      \item $m \gets (\mathtt{announce} \parallel \mathsf{uid} \parallel \mathsf{tid} \parallel \mathsf{event})$
      \item \textbf{if } $\Sigma.\Verify(\mathsf{pk}, m, \sigma) = 0$ \textbf{ then return } $\emptyset$
      \item $(\mathsf{up}, \mathsf{down}) \gets \SCRead(\mathsf{addr}, \mathsf{uid})$
      \item $r \gets \Rep(\mathsf{up}, \mathsf{down})$
      \item \textbf{if } $\mathsf{event} = \mathtt{started} \land r < \mathsf{MinRep}$ \textbf{ then return } $\emptyset$
      \item \textbf{if } $\mathsf{event} = \mathtt{stopped}$ \textbf{ then } $\mathcal{S}_{\mathsf{tid}} \gets \mathcal{S}_{\mathsf{tid}} \setminus \{(\mathsf{pk}, \cdot, \cdot)\}$
      \item \textbf{else } $\mathsf{ip}, \mathsf{port} \gets$ extract from request;

      $\mathcal{S}_{\mathsf{tid}} \gets \mathcal{S}_{\mathsf{tid}} \cup \{(\mathsf{pk}, \mathsf{ip}, \mathsf{port})\}$
      \item $\mathcal{P} \sample \mathcal{S}_{\mathsf{tid}} \setminus \{(\mathsf{pk}, \cdot, \cdot)\}$
      \item \textbf{return } $\mathcal{P}$
    \end{codelines}

  \end{tcolorbox}
  \caption{Torrent announcement with reputation-based access control. The tracker verifies the signature, computes reputation via $\Rep$ from on-chain statistics, enforces $\mathsf{MinRep}$ for new downloads, updates internal swarm state $\mathcal{S}_{\mathsf{tid}}$, and returns a random sample of peers.}
  \label{fig:pbts-announce}
\end{figure}

\paragraphNoSkip{P2P attestation}
Traditional trackers accept self-reported statistics. We replace this with cryptographic receipts where downloaders sign acknowledgments for received pieces.
Torrent metadata $\mathcal{T} = (h_{\mathcal{T}}, [h_1, h_2, \ldots, h_n])$ contains the infohash $h_{\mathcal{T}}$ and piece hashes $[h_1, \ldots, h_n]$. When peer $A$ uploads piece $p_i$ to $B$, peer $B$ verifies piece integrity ($\Hash(p_i) = h_i$) then generates a cryptographic receipt via $\Attest$ (\cref{fig:pbts-attest-timestamp}), signing a message that binds the infohash $h_{\mathcal{T}}$, sender's public key $\mathsf{pk}_A$, piece hash $h_i$, piece index $i$, and epoch timestamp. Peer $B$ returns this signed receipt to $A$.
Uploaders collect receipts from downloaders as proof of contributions. Later, uploaders report accumulated receipts to the tracker via $\Report$. The tracker verifies the aggregate signature from all downloaders, then credits the uploader's upload counter by the total piece size and each downloader's download counter. The contract serves as PKI: peers retrieve each other's public keys from on-chain registration records to verify receipt signatures.

\begin{figure}[ht!]
  \begin{tcolorbox}[protocol={Piece transfer attestation}]

    \step{$\Attest(\mathsf{sk}_{\mathsf{receiver}}, \mathsf{pk}_{\mathsf{sender}}, p_i, \mathcal{T}, t_{\text{epoch}})$:}
    \begin{codelines}
      \item Parse $\mathcal{T} = (h_{\mathcal{T}}, [h_1, \ldots, h_n])$
      \item \textbf{if } $i \notin [1, n]$ \textbf{ then return } $\bot$
      \item \textbf{if } $\Hash(p_i) \neq h_i$ \textbf{ then return } $\bot$
      \item $m \gets (h_{\mathcal{T}} \parallel \mathsf{pk}_{\mathsf{sender}} \parallel h_i \parallel i \parallel t_{\text{epoch}})$
      \item $\sigma_{\mathsf{receipt}} \gets \Sigma.\Sign(\mathsf{sk}_{\mathsf{receiver}}, m)$
      \item \textbf{return } $\sigma_{\mathsf{receipt}}$
    \end{codelines}

    \hlineprotocol \vspace{.1cm}

    \step{$\Verify(\mathsf{pk}_{\mathsf{receiver}}, \mathsf{pk}_{\mathsf{sender}}, p_i, \mathcal{T}, t_{\text{epoch}}, \sigma_{\mathsf{receipt}})$:}
    \begin{codelines}
      \item Parse $\mathcal{T} = (h_{\mathcal{T}}, [h_1, \ldots, h_n])$
      \item \textbf{if } $i \notin [1, n]$ \textbf{ then return } $0$
      \item \textbf{if } $\Hash(p_i) \neq h_i$ \textbf{ then return } $0$
      \item $m \gets (h_{\mathcal{T}} \parallel \mathsf{pk}_{\mathsf{sender}} \parallel h_i \parallel i \parallel t_{\text{epoch}})$
      \item \textbf{return } $\Sigma.\Verify(\mathsf{pk}_{\mathsf{receiver}}, m, \sigma_{\mathsf{receipt}})$
    \end{codelines}

  \end{tcolorbox}
  \caption{Piece transfer attestation with epoch-based double-spend resistance. $\Attest$ verifies piece integrity and generates a cryptographic receipt binding torrent infohash, sender public key, piece hash, piece index, and epoch. $\Verify$ checks signature and piece integrity. Epoch timestamps prevent receipt reuse.}
  \label{fig:pbts-attest-timestamp}
\end{figure}

\paragraphNoSkip{Statistics reporting}
Uploaders periodically call $\Report$ (\cref{fig:pbts-report-timestamp}) to claim credit for completed transfers. The report contains the receipts collected from downloaders and the claimed upload delta $\Delta_{\mathsf{up}}$. The tracker verifies the aggregate signature over all receipts, increments the reporter's upload counter by $\Delta_{\mathsf{up}}$, and increments each downloader's download counter by $\mathsf{piece\_size}$ per receipt. Receipt timestamps are checked against the current epoch: time is divided into windows of width $W$ seconds, and the tracker accepts receipts from the most recent $\Delta$ epochs, rejecting older ones to prevent replay.

\begin{figure}[ht!]
  \begin{tcolorbox}[protocoltee={Statistics reporting}]

      \step{$\Report(\mathsf{uid}, \mathsf{pk}, \{\mathsf{pk}_j\}_{j \in \mathcal{J}}, \mathcal{T},\{t_j\}_{j \in \mathcal{J}}, \sigma_{\mathsf{a}}, \Delta_{\mathsf{up}})$:}
    \begin{codelines}
      \item Parse $\mathcal{T} = (h_{\mathcal{T}}, [h_1, \ldots, h_n])$
      \item $t_{\text{now}} \gets$ current epoch
      \item \textbf{for each } $j \in \mathcal{J}$:
      \item \quad $t_{\text{epoch},j} \gets \lfloor t_j / W \rfloor$
      \item \quad \textbf{if } $t_{\text{epoch},j} \notin [t_{\text{now}} - \Delta, t_{\text{now}}]$ \textbf{ then return } $0$
      \item \quad $\mathsf{rid}_j \gets (h_{\mathcal{T}}, \mathsf{pk}, \mathsf{pk}_j, h_j, j, t_{\text{epoch},j})$
      \item \quad \textbf{if } $\mathsf{rid}_j \in \mathcal{R}_{\text{recent}}$ \textbf{ then return } $0$
      \item \quad $m_j \gets (h_{\mathcal{T}} \parallel \mathsf{pk} \parallel h_j \parallel j \parallel t_{\text{epoch},j})$
      \item \textbf{if} $\Sigma.\AggVerify(\{(\mathsf{pk}_j, m_j)\}_{j \in \mathcal{J}}, \sigma_{\mathsf{a}}) = 0$ \textbf{then return } $0$
      \item $(\mathsf{up}, \mathsf{down}) \gets \SCRead(\mathsf{addr}, \mathsf{uid})$
      \item $\mathsf{up}' \gets \mathsf{up} + \Delta_{\mathsf{up}}$
      \item $\SCWrite(\mathsf{addr}, \mathsf{uid}, (\mathsf{up}', \mathsf{down}), \mathsf{auth}_{\mathsf{TEE}})$
      \item \textbf{for each } $j \in \mathcal{J}$:
      \item \quad $\mathcal{R}_{\text{recent}} \gets \mathcal{R}_{\text{recent}} \cup \{\mathsf{rid}_j\}$
      \item \quad Retrieve downloader's user ID $\mathsf{uid}_j$ for $\mathsf{pk}_j$
      \item \quad $(\mathsf{up}_j, \mathsf{down}_j) \gets \SCRead(\mathsf{addr}, \mathsf{uid}_j)$
      \item \quad $\mathsf{down}_j' \gets \mathsf{down}_j + \mathsf{piece\_size}$
      \item \quad $\SCWrite(\mathsf{addr}, \mathsf{uid}_j, (\mathsf{up}_j, \mathsf{down}_j'), \mathsf{auth}_{\mathsf{TEE}})$
      \item \textbf{return } $1$
    \end{codelines}

  \end{tcolorbox}
  \caption{Statistics reporting with epoch-based double-spend resistance. Receipts expiry and deduplication are checked before batch verification via $\AggVerify$. IDs are stored in $\mathcal{R}_{\text{recent}}$ with periodic garbage collection.}
  \label{fig:pbts-report-timestamp}
\end{figure}

\paragraphNoSkip{Reputation migration}
When a tracker becomes unavailable, reputation migrates to a new instance via $\Migrate$ (\cref{fig:pbts-migrate}).
The new tracker generates a fresh instance ID and provides a cryptographic attestation of authenticity.
After verification, migration creates a new smart contract that references the predecessor contract, so any observer can trace reputation history back to the original deployment.
Reputation data remains immutable in the old contract and becomes accessible through the new contract's referrer link. Single-level migration prevents complex multi-hop inheritance while enabling tracker continuity.
The end-to-end initialization and migration message flow is detailed in \Cref{fig:workflow}.

\begin{figure}[ht!]
  \begin{tcolorbox}[protocoltee={Reputation migration}]

    \step{$\Migrate(\mathsf{addr}_{\mathsf{old}}, \pi)$:}
    \begin{codelines}
      \item Parse $\pi = (\mathsf{iid}_{\mathsf{new}}, \mathsf{auth}_{\mathsf{TEE}})$
      \item Verify TEE attestation $\mathsf{auth}_{\mathsf{TEE}}$ for new tracker instance
      \item \textbf{if } attestation is invalid \textbf{ then return } $\bot$
      \item $\mathsf{params} \gets (\mathsf{iid}_{\mathsf{new}}, \mathsf{addr}_{\mathsf{old}}, \mathsf{pk}_{\mathsf{tracker}}, \mathsf{auth}_{\mathsf{TEE}})$
      \item $\mathsf{addr}_{\mathsf{new}} \gets \SCInit(\mathsf{params})$
      \item \textbf{return } $\mathsf{addr}_{\mathsf{new}}$
    \end{codelines}

  \end{tcolorbox}
  \caption{Tracker migration with TEE attestation verification. Deploys a new smart contract referencing the old contract as predecessor, establishing single-hop inheritance that preserves reputation continuity.}
  \label{fig:pbts-migrate}
\end{figure}

\paragraphNoSkip{Reputation contract layer}
The \emph{RepFactory} contract uses a factory pattern to deploy per-tracker reputation contracts. Each contract records upload and download counters keyed by user ID, grants exclusive write access to the deploying tracker instance, and optionally references a predecessor contract as a \emph{referrer} for single-hop reputation inheritance on migration. On-chain storage is pseudonymous: only public keys and counters are recorded, with no direct link to real-world identities. We note, however, that this anonymity is shallow and can be broken through traffic analysis or public-key correlation; we discuss privacy risks and mitigations in \cref{sec:privacy}.

\begin{figure}[ht!]
  \begin{tcolorbox}[protocol={Reputation contract}]

    \step{State:}
    \begin{codelines}
      \item $\mathsf{owner} : \mathsf{pk}$ \quad \textit{// tracker's on-chain key}
      \item $\mathsf{referrer} : \mathsf{addr} \cup \{\bot\}$ \quad \textit{// predecessor contract}
      \item $\mathsf{data} : \mathsf{uid} \to (\mathsf{up}, \mathsf{down})$
    \end{codelines}

    \hlineprotocol \vspace{.1cm}

    \step{$\SCInit(\mathsf{iid}, \mathsf{ref}, \mathsf{pk}_{\mathsf{tracker}}, \mathsf{auth})$:}
    \begin{codelines}
      \item \textbf{if} $\Sigma.\Verify(\mathsf{pk}_{\mathsf{tracker}}, \mathsf{iid}, \mathsf{auth}) = 0$ \textbf{then return} $\bot$
      \item $\mathsf{owner} \gets \mathsf{pk}_{\mathsf{tracker}}$
      \item $\mathsf{referrer} \gets \mathsf{ref}$
      \item \textbf{return} newly deployed contract address $\mathsf{addr}$
    \end{codelines}

    \hlineprotocol \vspace{.1cm}

    \step{$\SCRead(\mathsf{uid})$:}
    \begin{codelines}
      \item \textbf{if} $\mathsf{data}[\mathsf{uid}] \neq \bot$ \textbf{then return} $\mathsf{data}[\mathsf{uid}]$
      \item \textbf{if} $\mathsf{referrer} \neq \bot$ \textbf{then return} $\SCRead_{\mathsf{referrer}}(\mathsf{uid})$
      \item \textbf{return} $\bot$
    \end{codelines}

    \hlineprotocol \vspace{.1cm}

    \step{$\SCWrite(\mathsf{uid}, \mathsf{value}, \mathsf{auth})$:}
    \begin{codelines}
      \item \textbf{if} $\Sigma.\Verify(\mathsf{owner}, \mathsf{uid} \parallel \mathsf{value}, \mathsf{auth}) = 0$ \textbf{then return} $\bot$
      \item $\mathsf{data}[\mathsf{uid}] \gets \mathsf{value}$
      \item \textbf{return} $\mathsf{success}$
    \end{codelines}

  \end{tcolorbox}
  \caption{Reputation contract deployed by the RepFactory. $\SCInit$ binds the contract to a tracker key and an optional predecessor. $\SCRead$ performs a single-hop fallback to the predecessor if the user is not found locally. $\SCWrite$ enforces that only the owning tracker can update state.}
  \label{fig:rep-contract-layer}
\end{figure}

\subsection{Tracker-less peer discovery and file exchange}
\label{subsec:dht-fallback}

Private trackers typically disable DHT-based peer discovery because standard DHT protocols lack authentication and access control. Any peer can freely join a swarm, undermining reputation-based admission and access control enforcement. Consequently, private communities mark torrents as \emph{private}, forcing all peer discovery to occur exclusively through a trusted tracker.

We extend Kademlia~\cite{maymounkov2002kademlia} with authentication to provide a DHT fallback preserving access control when the tracker is unavailable. The smart contract serves as PKI: registered users have on-chain public keys and reputation records. Peers authenticate DHT announcements using these credentials, admitting only users with sufficient reputation.

Kademlia is a distributed hash table that maps keys to values without centralized coordination. Each node has a 160-bit identifier, and data are stored on the nodes whose identifiers are closest to a given key under XOR distance $d(x,y) = x \oplus y$. To join, a peer contacts bootstrap nodes from list $\mathcal{B}$ to learn about other participants and builds a routing table of known nodes. To announce availability for torrent $h_{\mathcal{T}}$, peer $P_i$ locates the $k$ nodes closest to $h_{\mathcal{T}}$ (typically $k=20$) through iterative lookups and stores its contact information $(\mathsf{pk}_i, \mathsf{ip}_i, \mathsf{port}_i)$ on those nodes. To discover peers, a client performs the same lookup and retrieves peer records from the closest nodes. These operations complete in $O(\log n)$ steps for $n$ participants.

Each \texttt{.torrent} file includes bootstrap information $(\mathcal{B}, \mathsf{addr}_{\mathsf{rep}})$, where $\mathcal{B} = \{(\mathsf{ip}_i, \mathsf{port}_i)\}_{i \in [k]}$ lists DHT bootstrap nodes and $\mathsf{addr}_{\mathsf{rep}}$ specifies the reputation contract address. When the tracker becomes unavailable, peer $P_i$ joins with node identifier $\mathsf{nodeID}_i = \Hash(\mathsf{pk}_i)$, binding each DHT identity to a verifiable on-chain user.

When peer $P_i$ announces for torrent $h_{\mathcal{T}}$, it sends message $m = (\mathtt{announce} \parallel h_{\mathcal{T}} \parallel \mathsf{pk}_i \parallel \mathsf{ip}_i \parallel \mathsf{port}_i)$ with signature $\sigma_i = \Sign(\mathsf{sk}_i, m)$ to the $k$ closest nodes. Each node $n$ verifies the signature and queries the smart contract: $(\mathsf{pk}_i', u_i, d_i) \gets \SCRead(\mathsf{addr}_{\mathsf{rep}}, \mathsf{uid}_i)$. Node $n$ accepts the announcement only if $\mathsf{pk}_i' = \mathsf{pk}_i$ and $\Rep(u_i, d_i) \ge \mathsf{MinRep}$, then adds $(\mathsf{pk}_i, \mathsf{ip}_i, \mathsf{port}_i)$ to its peer list for $h_{\mathcal{T}}$.

Each peer $P_i$ maintains a local view $\mathcal{S}_{\mathsf{local}}^{(i)} \subseteq \mathcal{S}$ of active peers for each torrent. When $P_j$ announces to $P_i$, the protocol mirrors the centralized $\Announce$ procedure (\cref{fig:pbts-announce}), except that $P_i$ uses its local view instead of a global tracker database. $P_i$ verifies $P_j$'s signature $\sigma_j = \Sign(\mathsf{sk}_j, \mathtt{announce} \parallel \mathsf{uid}_j \parallel h_{\mathcal{T}} \parallel \mathsf{event})$, checks on-chain registration and reputation, updates $\mathcal{S}_{\mathsf{local}}^{(i)}$, and returns a random sample $\mathcal{P} \sample \mathcal{S}_{\mathsf{local}}^{(i)}$. Local views converge over time through authenticated DHT announcements, peer exchange (PEX), and periodic re-announcements. Peers can cache contact information from tracker responses during normal operation to build their own bootstrap node sets $\mathcal{B}_{\mathsf{cached}}$, ensuring rapid DHT network joining when the tracker becomes unavailable. By Kademlia's logarithmic routing properties, active peers remain discoverable in $O(\log n)$ hops.

File transfer follows the attestation protocol (see \cref{subsec:construction}). Each piece transfer from $P_s$ to $P_r$ produces receipt $\sigma_{\mathsf{receipt}} = \Attest(\mathsf{sk}_r, \mathsf{pk}_s, p_i, \mathcal{T}, t_{\text{epoch}})$. During tracker downtime, peers store receipts locally in $\mathcal{R} = \{(\mathsf{pk}_j, p_i, \mathcal{T}, \sigma_j)\}$. After recovery, peers submit accumulated receipts via $\Report$ to update on-chain reputation.

\subsection{Optimized attestation}
\label{subsec:optimized-attest}

Per-piece BLS attestation ensures strong verifiability but introduces computational overhead during high-throughput transfers. We discuss several optimizations that reduce this signing overhead.

The frequency of cryptographic attestation can be adjusted based on trust dynamics during a transfer. At session start, trust between peers is not yet established: the sender has not demonstrated reliability and the receiver may defect. Frequent signatures during this phase provide strong accountability and enable early termination if either party misbehaves. As the session progresses and mutual trust builds through successful exchanges, signing frequency can decrease. Near session end, frequent signatures resume: the remaining pieces become highly valuable to the receiver, and the sender requires proof of delivery to claim full credit. A practical policy signs every piece during the first 100 pieces, every 10 pieces during the middle phase, and every piece for the final 100 pieces. For a transfer with $n$ pieces (where $n > 200$), this requires $200 + \lceil(n-200)/10\rceil$ signatures instead of $n$, reducing overhead by approximately $(1 - \frac{200 + (n-200)/10}{n}) \approx 0.9 - \frac{180}{n}$, approaching 90\% reduction for large files while maintaining security at critical trust boundaries.

Established reputable peers can negotiate reduced signing frequencies. The tracker provides reputation scores during the $\Announce$ phase. High-reputation receivers may propose signing every $k$ pieces, where $k$ scales with reputation. Senders accept this proposal only if receivers' on-chain reputation exceeds a threshold. If a receiver later defects, senders report fraud using the partial attestations, and the tracker penalizes the receiver's reputation. This approach amortizes overhead for trusted peers while maintaining accountability through reputation at risk.

Rather than signing individual pieces, receivers can sign commitments to batches of pieces.
A receiver computes a Merkle tree over piece hashes and signs the root after transferring $k$ pieces. The sender verifies each piece against the torrent metadata during transfer and accepts the batch signature as proof of all $k$ pieces. For $k=10$, this reduces signature operations by 10$\times$. The trade-off is reduced granularity: if the receiver defects mid-batch, the sender loses credit for transferred pieces. Batch size should be chosen based on piece value: smaller batches for high-value content, larger batches for bulk transfers.

For scenarios requiring per-piece attestation with minimal overhead, we switch to a more efficient signature scheme for the duration of the transfer.
We employ ECDSA (secp256k1)~\cite{JohnsonMV01}, which offers faster signing and verification than BLS; on most consumer devices ECDSA operations are further accelerated by dedicated cryptographic hardware, making per-piece signing cheap in practice.
Because ECDSA keys are not registered on-chain, this switch requires a handshake that roots the ephemeral ECDSA keypair in the receiver's long-term BLS identity: the receiver generates a fresh ECDSA keypair for the session and certifies it with a single BLS signature, binding the fast per-piece scheme to the overarching reputation scheme.
The tracker can then verify the BLS certificate once and accept all subsequent ECDSA receipts from that session without further BLS operations.

Let $\Sigma_{\mathsf{BLS}}$ be the long-term signature scheme and $\Sigma_{\mathsf{ECDSA}}$ be the ECDSA scheme. At session start, the receiver generates $(\mathsf{sk}^{\mathsf{ECDSA}}, \mathsf{pk}^{\mathsf{ECDSA}})$ and signs $\mathsf{pk}^{\mathsf{ECDSA}}$ along with session metadata using their long-term BLS key, producing a session certificate:
\begin{align*}
\mathsf{sid} &\sample \{0,1\}^{256} \\
m_0 &\gets (\mathsf{sid} \parallel h_{\mathcal{T}} \parallel \mathsf{pk}_{\mathsf{sender}} \parallel \mathsf{pk}^{\mathsf{ECDSA}}) \\
\sigma_0 &\gets \Sigma_{\mathsf{BLS}}.\Sign(\mathsf{sk}_{\mathsf{receiver}}^{\mathsf{BLS}}, m_0)
\end{align*}

For each piece $p_i$, the receiver signs $(h_{\mathcal{T}}, \mathsf{pk}_{\mathsf{sender}}, h_i, i)$ using $\mathsf{sk}^{\mathsf{ECDSA}}$. The sender verifies each signature immediately, preserving tit-for-tat. All per-piece signatures $\{\sigma_i\}_{i \in [n]}$ are collected and reported to the tracker, which verifies each signature individually. The tracker then credits the sender $n \times \mathsf{piece\_size}$ bytes.

When reporting transfers with multiple peers, the sender aggregates BLS session authentication signatures across peers. For sessions with peers indexed by $j \in \mathcal{J}$, each with session certificate $\mathsf{cert}_j = (\mathsf{sid}_j, h_{\mathcal{T}}, \mathsf{pk}_{\mathsf{sender}}, \mathsf{pk}_j^{\mathsf{ECDSA}})$ and ECDSA signatures $\{\sigma_{i,j}\}$ for transferred pieces, the sender computes:
\begin{align*}
\sigma_{\mathsf{agg}} \gets \Sigma_{\mathsf{BLS}}.\Aggregate(\{(\mathsf{pk}_j, \mathsf{cert}_j, \sigma_{0,j})\}_{j \in \mathcal{J}})
\end{align*}
The report contains one aggregated BLS signature plus all ECDSA per-piece signatures for each peer session.

\Cref{tab:attest-comparison} compares optimization approaches for a 5.1 GB torrent with 2,560 pieces (2 MB each).

\begin{table}[htbp]
\centering
\caption{Comparison of attestation optimizations} 
\label{tab:attest-comparison}

\begin{tabular}{l@{\hspace{-5.0pt}}cc@{\hspace{-6pt}}c}
\toprule
\textbf{Approach} & \textbf{Signatures} & \textbf{Sign Time}\textsuperscript{a} & \textbf{Report Size}\textsuperscript{b} \\
\midrule
BLS & 2,560 & 2.16s & 2.3 MB\textsuperscript{c} \\
Adaptive freq. & $\sim$512 & 0.43s & 456 KB \\
Batch ($k=10$) & 256 & 0.22s & 228 KB \\
ECDSA & 2,560 & $<$0.22s\textsuperscript{d} & 160 KB \\
\bottomrule
\multicolumn{4}{l}{\footnotesize \textsuperscript{a}Signing cost only (Rust/blst BLS receipt generation).} \\
\multicolumn{4}{l}{\footnotesize \textsuperscript{b}Report size assumes BLS aggregation where applicable.} \\
\multicolumn{4}{l}{\footnotesize \textsuperscript{c}\parbox[t]{0.9\columnwidth}{After BLS aggregation: 32 bytes per piece hash plus one 48-byte agg. signature.}} \\
\multicolumn{4}{l}{\footnotesize \textsuperscript{d}\parbox[t]{0.9\columnwidth}{ECDSA (secp256k1) signing is ${\approx}10\times$ faster than BLS (0.08 ms vs.\ 0.84 ms per piece).}}
\end{tabular}
\end{table}

ECDSA provides fast per-piece attestation with moderate bandwidth overhead. With signing approximately 10$\times$ faster than BLS (0.08 ms per piece compared to 0.84 ms), ECDSA reduces computational overhead by 90\% while maintaining full per-piece verification during transfer. The report size of 160 KB for 2,560 pieces represents an $\approx$14$\times$ reduction compared to per-piece BLS before aggregation, though larger than batched approaches. Adaptive and batch approaches trade granularity for further reduced computational and bandwidth overhead. 
\section{Security analysis}
\label{sec:security}

We now formally analyze the security properties of our Persistent BitTorrent Tracker System (PBTS). Our security model addresses the core threats identified in \cref{subsec:threat-model}: reputation manipulation, false reporting, and unauthorized tracker operations. Each property is defined through a security game between a challenger and a probabilistic polynomial-time (PPT) adversary $\mathcal{A}$, capturing the adversary's advantage through a probability expression.

This section relies on three assumptions:
\emph{(1)} The signature scheme $\Sigma$ satisfies strong unforgeability (sUF-CMA) and aggregate unforgeability per \cref{def:signature}.
\emph{(2)} The smart contract layer ensures integrity of state transitions, rejecting unauthorized writes. \emph{(3)} In the untrusted-operator model, the TEE provides correct execution with secure attestation. Properties~\ref{subsec:reg-auth}--\ref{subsec:receipt-nonreuse} hold under assumptions (1)--(2) alone in the trusted-operator model; assumption~(3) additionally protects against a malicious tracker operator.

We analyze four security properties. \emph{Registration authenticity} (\cref{subsec:reg-auth}) prevents identity impersonation by requiring valid signatures from key holders during account creation. \emph{Receipt non-repudiation} (\cref{subsec:receipt-nonrepud}) makes it impossible for peers to deny having received data after signing acknowledgments. \emph{Report soundness} (\cref{subsec:report-soundness}) bounds reputation inflation: users can only claim credit supported by valid receipts from other peers. \emph{Receipt non-reusability} (\cref{subsec:receipt-nonreuse}) prevents double-spending attacks where the same receipt appears in multiple reports.

These properties guarantee that reputation scores reflect actual contributions, the on-chain state remains consistent with real file transfers, and malicious users gain no advantage over honest participants.

\subsection{Registration authenticity}
\label{subsec:reg-auth}

The first requirement for a secure reputation system is that identities cannot be forged or stolen. In PBTS, each user registers by proving ownership of a public key through a digital signature. This prevents adversaries from registering under someone else's public key or creating accounts without corresponding secret keys, which would enable various attacks such as reputation theft or Sybil identity creation without accountability.

\begin{definition}[Registration authenticity]
\label{def:reg-auth}
A PBTS scheme satisfies registration authenticity if for any PPT adversary $\mathcal{A}$, the probability
\begin{align*}
\Pr\left[\begin{array}{l}
\mathsf{pp} \gets \Setup(1^\lambda); \\
(\mathsf{uid}^*, \mathsf{pk}^*, \sigma^*, \mathsf{params}^*) \gets \mathcal{A}^{\Register(\cdot)}(\mathsf{pp}) : \\
\Register(\mathsf{uid}^*, \mathsf{pk}^*, \sigma^*, \mathsf{params}^*) = 1 \\
\land\, \mathsf{pk}^* \notin \mathcal{Q}_{\mathsf{reg}}
\end{array}\right]
\end{align*}
is negligible in $\lambda$, where $\mathcal{Q}_{\mathsf{reg}}$ is the public key set $\mathcal{A}$ submitted to $\Register$ queries (i.e., keys for which $\mathcal{A}$ generated or obtained the corresponding secret keys).
\end{definition}

\begin{theorem}
\label{thm:reg-auth}
If $\Sigma$ is an EUF-CMA secure signature scheme and the TEE provides correct execution and attestation, then the PBTS construction from \cref{subsec:construction} satisfies registration authenticity.
\end{theorem}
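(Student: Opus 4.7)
The plan is to reduce registration authenticity to the EUF-CMA security of $\Sigma$, using the TEE correctness assumption to argue that the signature check in line~2 of $\Register$ is actually performed on the adversary's submission, and the smart contract integrity assumption to argue that no shortcut via unauthorized writes is available. Concretely, I would construct a reduction $\mathcal{B}$ that, given a challenge public key $\mathsf{pk}$ and a signing oracle from the EUF-CMA challenger, simulates the PBTS registration experiment for $\mathcal{A}$ and, whenever $\mathcal{A}$ outputs a successful registration under a key whose secret key it does not possess, translates this into an EUF-CMA forgery.

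\textbf{Main steps.} First, $\mathcal{B}$ runs $\Setup(1^\lambda, \mathsf{MinRep}, \mathsf{InitCredit})$ honestly to obtain $\mathsf{pp}$; the TEE assumption guarantees that the sampled $\mathsf{iid}$ and other parameters are exactly those of the real game. Second, $\mathcal{B}$ answers each $\Register$ query from $\mathcal{A}$ by executing the code of \cref{fig:pbts-register} verbatim; since $\mathcal{A}$ itself supplies $(\mathsf{uid}, \mathsf{pk}, \sigma, \mathsf{params})$, no signing oracle is needed for these queries, and $\SCRead$/$\SCWrite$ are emulated by an in-memory map that mirrors contract state. The set $\mathcal{Q}_{\mathsf{reg}}$ is tracked on the side. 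Third, when $\mathcal{A}$ halts with $(\mathsf{uid}^*, \mathsf{pk}^*, \sigma^*, \mathsf{params}^*)$ such that $\Register$ returns $1$, the TEE assumption forces line~2 to have succeeded, so $\Sigma.\Verify(\mathsf{pk}^*, \mathtt{register} \parallel \mathsf{iid} \parallel \mathsf{uid}^*, \sigma^*) = 1$. Since $\mathsf{pk}^* \notin \mathcal{Q}_{\mathsf{reg}}$, the adversary never exhibited knowledge of $\mathsf{sk}^*$, so the pair $(m^*, \sigma^*)$ with $m^* = \mathtt{register} \parallel \mathsf{iid} \parallel \mathsf{uid}^*$ is a candidate EUF-CMA forgery.

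\textbf{Main obstacle.} The sticking point is bridging $\mathcal{A}$'s choice of $\mathsf{pk}^*$ with $\mathcal{B}$'s externally fixed challenge key $\mathsf{pk}$: the EUF-CMA game forces $\mathcal{B}$ to commit to one target key at the outset, whereas here $\mathcal{A}$ synthesizes $\mathsf{pk}^*$ itself. The standard resolution is a guessing hybrid in which $\mathcal{B}$ treats the ``known-secret-key'' set $\mathcal{Q}_{\mathsf{reg}}$ as an implicit enumeration of honestly generated keys in $\mathcal{A}$'s view: $\mathcal{B}$ injects its challenge $\mathsf{pk}$ into the simulated environment as one of polynomially many nominally honest identities (answering signature asks via its oracle), guesses the index of the eventual forgery target, and aborts otherwise; this loses only a polynomial factor. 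If $\mathcal{Q}_{\mathsf{reg}}$ is instead read literally as the set of oracle-submitted keys, then a cleaner one-step reduction suffices, since any $\mathsf{pk}^* \notin \mathcal{Q}_{\mathsf{reg}}$ for which a verifying $\sigma^*$ is produced directly contradicts EUF-CMA when $\mathsf{pk}$ is embedded as the unique honest key. In either phrasing, the PBTS advantage is bounded above (up to polynomial loss) by the EUF-CMA advantage, which is negligible. The smart contract integrity assumption is then only needed to exclude the alternative winning strategy where $\mathcal{A}$ bypasses $\Register$ and forges $\mathsf{auth}_{\mathsf{TEE}}$ to write directly via $\SCWrite$, ruled out by definition of authenticated writes.
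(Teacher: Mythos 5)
Your proposal is correct and takes essentially the same route as the paper: a direct reduction in which $\mathcal{B}$ simulates the registration experiment, answers $\Register$ queries while tracking $\mathcal{Q}_{\mathsf{reg}}$, and converts a successful registration under a key outside $\mathcal{Q}_{\mathsf{reg}}$ into an EUF-CMA forgery on $m^* = (\mathtt{register} \parallel \mathsf{iid} \parallel \mathsf{uid}^*)$. The only difference is that you explicitly surface and resolve the challenge-key-embedding issue (how $\mathcal{A}$'s self-chosen $\mathsf{pk}^*$ is matched to $\mathcal{B}$'s fixed challenge key, at a polynomial guessing loss), a subtlety the paper's proof elides by simply identifying the adversary's output key with the challenger's key.
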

\begin{proof}
We prove by reduction to the EUF-CMA security of $\Sigma$, which is implied by the assumed sUF-CMA security. Suppose there exists a PPT adversary $\mathcal{A}$ that breaks registration authenticity with non-negligible advantage $\epsilon$. We construct a PPT algorithm $\mathcal{B}$ that uses $\mathcal{A}$ to break the EUF-CMA security of $\Sigma$ with advantage $\epsilon$.

$\mathcal{B}$ receives a challenge public key $\mathsf{pk}^*$ from the EUF-CMA challenger and has access to a signing oracle $\mathcal{O}_{\Sign}(\mathsf{sk}^*, \cdot)$. $\mathcal{B}$ runs $\mathsf{pp} \gets \Setup(1^\lambda)$ and gives $\mathsf{pp}$ to $\mathcal{A}$. 

When $\mathcal{A}$ makes registration queries, $\mathcal{B}$ responds like so:
\begin{itemize}
  \item For queries with $\mathsf{pk} \neq \mathsf{pk}^*$: $\mathcal{B}$ generates independent key pairs $(\mathsf{sk}, \mathsf{pk}) \gets \KeyGen()$ and processes registration normally, recording $\mathsf{pk}$ in $\mathcal{Q}_{\mathsf{reg}}$.
  \item For queries involving $\mathsf{pk}^*$: $\mathcal{B}$ uses its signing oracle $\mathcal{O}_{\Sign}$ to generate the registration signature, but does not add $\mathsf{pk}^*$ to $\mathcal{Q}_{\mathsf{reg}}$ since $\mathcal{B}$ does not know $\mathsf{sk}^*$.
\end{itemize}

When $\mathcal{A}$ outputs $(\mathsf{uid}^*, \mathsf{pk}^*, \sigma^*, \mathsf{params}^*)$ with $\mathsf{pk}^* \notin \mathcal{Q}_{\mathsf{reg}}$ and $\Register(\mathsf{uid}^*, \mathsf{pk}^*, \sigma^*, \mathsf{params}^*) = 1$, the registration algorithm (\cref{fig:pbts-register}) verifies:
\begin{align*}
m^* = (\mathtt{register} \parallel \mathsf{iid} \parallel \mathsf{uid}^*) \\
\Sigma.\Verify(\mathsf{pk}^*, m^*, \sigma^*) = 1
\end{align*}

Since $\mathsf{pk}^* \notin \mathcal{Q}_{\mathsf{reg}}$ and $m^*$ was not queried to $\mathcal{O}_{\Sign}$, the pair $(m^*, \sigma^*)$ constitutes a valid signature forgery. $\mathcal{B}$ outputs this forgery, breaking the EUF-CMA security of $\Sigma$ with advantage $\epsilon$ and contradicting the assumed EUF-CMA security of $\Sigma$, so $\epsilon$ must be negligible. 
\end{proof}

\subsection{Receipt non-repudiation}
\label{subsec:receipt-nonrepud}

A key component of our system is the p2p attestation mechanism, where receiving peers sign cryptographic receipts acknowledging piece transfers. For this to be meaningful, receipts must be non-repudiable: one cannot credibly deny receiving data after producing a valid receipt.

\begin{definition}[Receipt non-repudiation]
\label{def:receipt-nonrepud}
A PBTS scheme satisfies receipt non-repudiation if for any PPT adversary $\mathcal{A}$, the probability
\begin{align*}
\Pr\left[\begin{array}{l}
\mathsf{pp} \gets \Setup(1^\lambda, \mathsf{MinRep}, \mathsf{InitCredit}); \\
(\mathsf{sk}_S, \mathsf{pk}_S) \gets \KeyGen(); \\
(p_i, \mathcal{T}, t_{\text{epoch}}, \mathsf{sk}_\mathcal{A}, \mathsf{pk}_\mathcal{A}) \gets \mathcal{A}(\mathsf{pp}, \mathsf{pk}_S); \\
\sigma_{\mathsf{receipt}} \gets \Attest(\mathsf{sk}_\mathcal{A}, \mathsf{pk}_S, p_i, \mathcal{T}, 
 t_{\text{epoch}}); \\
\mathsf{b} \gets \Report(\mathsf{uid}_S, \mathsf{pk}_S, \{\mathsf{pk}_\mathcal{A}\}, \mathcal{T}, \\
\quad\quad \{t_{\text{epoch}}\}, \sigma_{\mathsf{receipt}}, \Delta_{\mathsf{up}}) : \\
\Verify(\mathsf{pk}_\mathcal{A}, \mathsf{pk}_S, p_i, \mathcal{T}, t_{\text{epoch}}, \\
\quad\quad \sigma_{\mathsf{receipt}}) = 1 \land \mathsf{b} = 0
\end{array}\right]
\end{align*}
is negligible in $\lambda$. The adversary wins if $\Attest$ produces a receipt $\sigma_{\mathsf{receipt}}$ that verifies but causes an honest sender's $\Report$ to fail (successful repudiation).
\end{definition}

\begin{theorem}[Receipt non-repudiation]
\label{thm:receipt-nonrepud}
If $\Sigma$ is EUF-CMA secure and the smart contract enforces access control, then the PBTS construction satisfies receipt non-repudiation.
\end{theorem}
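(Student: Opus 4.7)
I plan to prove the theorem by reduction to the EUF-CMA security of $\Sigma$, using the TEE-execution and smart-contract-integrity hypotheses to discharge the non-cryptographic rejection paths inside $\Report$. Suppose a PPT adversary $\mathcal{A}$ wins the game of \cref{def:receipt-nonrepud} with non-negligible advantage $\epsilon$. I will build a reduction $\mathcal{B}$ that, given an EUF-CMA challenge $\mathsf{pk}^*$ and a signing oracle $\mathcal{O}_{\Sign}(\mathsf{sk}^*, \cdot)$, simulates the game for $\mathcal{A}$ and extracts a forgery.

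First, $\mathcal{B}$ runs $\Setup$ and generates $(\mathsf{sk}_S, \mathsf{pk}_S) \gets \KeyGen()$ honestly, then invokes $\mathcal{A}(\mathsf{pp}, \mathsf{pk}_S)$. By a standard guessing argument over $\mathcal{A}$'s polynomially many key candidates (losing a polynomial factor in $\epsilon$), $\mathcal{B}$ substitutes $\mathsf{pk}^*$ for $\mathcal{A}$'s declared $\mathsf{pk}_\mathcal{A}$. When the game invokes $\Attest(\mathsf{sk}_\mathcal{A}, \mathsf{pk}_S, p_i, \mathcal{T}, t_{\text{epoch}})$, $\mathcal{B}$ forms the message $m = (h_{\mathcal{T}} \parallel \mathsf{pk}_S \parallel h_i \parallel i \parallel t_{\text{epoch}})$ (per \cref{fig:pbts-attest-timestamp}) and obtains the nominal receipt $\tilde{\sigma} \gets \mathcal{O}_{\Sign}(m)$, which it returns as $\sigma_{\mathsf{receipt}}$.

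Next, I will walk through each rejection path in $\Report$ (\cref{fig:pbts-report-timestamp}). Line~5's epoch-window check passes because the honest sender submits the report in the same epoch window as $\Attest$; line~7's deduplication check passes because $\mathcal{R}_{\text{recent}}$ is empty at game start; lines~10--18 succeed by the TEE-correctness and contract-integrity premises. The only remaining rejection path is line~9's $\Sigma.\AggVerify$ call on the singleton $(\mathsf{pk}^*, m, \sigma_{\mathsf{receipt}})$; by the aggregate-correctness clause of \cref{def:signature}, this returns $1$ iff $\Sigma.\Verify(\mathsf{pk}^*, m, \sigma_{\mathsf{receipt}}) = 1$, which matches the game's $\Verify$ predicate. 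Hence the winning event $\Verify = 1 \land \mathsf{b} = 0$ can occur only when the $\sigma_{\mathsf{receipt}}$ actually supplied to $\Report$ differs from $\tilde{\sigma}$ yet still verifies under $\mathsf{pk}^*$ on $m$. In that event, $\mathcal{B}$ submits $(m, \sigma_{\mathsf{receipt}})$ as its forgery; since $\mathcal{B}$'s single query on $m$ produced $\tilde{\sigma} \neq \sigma_{\mathsf{receipt}}$, it wins the EUF-CMA game with advantage at least $\epsilon/\mathsf{poly}(\lambda)$, contradicting the assumption.

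The main obstacle is that the game hands $\mathcal{A}$ a freshly chosen secret key $\mathsf{sk}_\mathcal{A}$, which a priori permits trivial signing and seems to foreclose any EUF-CMA embedding. I will handle this by invoking \emph{strong} EUF-CMA, so that even a key-holding adversary cannot produce a second valid signature on $m$ that deviates from the canonical $\Sigma.\Sign$ output -- a property already satisfied by the BLS and deterministic ECDSA/EdDSA instantiations named in \cref{def:signature}. I will also need to argue that the epoch-window and deduplication preconditions reflect meta-assumptions about honest sender behavior: an adversary who wins by choosing a stale $t_{\text{epoch}}$ or a previously-reported $\mathsf{rid}$ is not truly repudiating, because an honest sender would decline such a receipt before calling $\Report$; the theorem is meaningful only in the honest-submission regime, which I will spell out as a precondition on the winning event so that the reduction cleanly isolates the signature-forgery component as the sole source of adversarial advantage.
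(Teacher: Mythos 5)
Your middle paragraph --- the walk through $\Report$'s rejection branches (epoch window, deduplication, hash match, TEE-correct execution of the contract reads and writes) --- is exactly the content of the paper's proof, and it is the part that does all the work. The paper argues directly: by the winning condition the signature verifies; the honest sender reports a fresh receipt immediately, so the epoch check passes; $\mathsf{rid} \notin \mathcal{R}_{\text{recent}}$ on first use; and $h_i = \Hash(p_i)$ because the sender reports the piece it actually transferred. Hence no rejection branch fires, $\Report$ must return $1$ under TEE-correct execution, and the winning event is empty. No reduction is needed, because in the game of \cref{def:receipt-nonrepud} the receipt handed to $\Report$ is the very output of an honest run of $\Attest(\mathsf{sk}_\mathcal{A},\cdot)$; its verification is guaranteed by signature \emph{correctness}, not unforgeability, and the adversary never gets an opportunity to substitute a different signature.

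The reduction you wrap around this core is unsound and should be dropped. The game requires $\mathcal{A}$ to output $\mathsf{sk}_\mathcal{A}$ itself, so your $\mathcal{B}$ cannot ``substitute $\mathsf{pk}^*$ for $\mathcal{A}$'s declared $\mathsf{pk}_\mathcal{A}$'': the challenger's $\Attest$ call must be fed a secret key matching $\mathsf{pk}^*$, which $\mathcal{B}$ does not possess, and there is no oracle interaction over which a guessing argument could be run --- the adversary outputs one key pair at the end. Your own diagnosis of this obstacle is correct, but the proposed fix of invoking \emph{strong} EUF-CMA does not repair it: strong unforgeability is still a statement about adversaries who do not hold the signing key, it is a strictly stronger assumption than the theorem's EUF-CMA hypothesis, and the event it is meant to exclude (a signature ``actually supplied'' to $\Report$ that differs from the oracle's output on $m$) never occurs in this game, since the challenger passes the same $\sigma_{\mathsf{receipt}}$ to both $\Verify$ and $\Report$. (As a side note, ECDSA is not strongly unforgeable due to $(r,s)\mapsto(r,-s)$ malleability, so even the instantiation claim is shaky.) Keep the rejection-branch analysis, justify the signature check by correctness of $\Sigma$, and delete the forgery extraction.
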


\begin{proof}
Suppose adversary $\mathcal{A}$ wins the non-repudiation game with non-negligible advantage $\epsilon$. Then $\mathcal{A}$ produces a receipt $\sigma_{\mathsf{receipt}} = \Attest(\mathsf{sk}_\mathcal{A}, \mathsf{pk}_S, p_i, \mathcal{T}, t_{\text{epoch}})$ that satisfies $\Verify(\mathsf{pk}_\mathcal{A}, \mathsf{pk}_S, p_i, \mathcal{T}, t_{\text{epoch}}, \sigma_{\mathsf{receipt}}) = 1$, but the honest sender's $\Report$ call returns $0$.

The $\Report$ algorithm (\cref{fig:pbts-report-timestamp}) rejects a report only if signature verification fails, the timestamp $t_{\text{epoch}}$ is outside the valid epoch window $[t_{\text{now}} - \Delta, t_{\text{now}}]$, the receipt has been used before (double-spending with $\mathsf{rid} \in \mathcal{R}_{\text{recent}}$), or the piece hash does not match ($h_i \neq \Hash(p_i)$).

By the winning condition, signature verification succeeds, so the first condition does not hold. For an honest sender immediately reporting a new transfer, $t_{\text{epoch}}$ is current and within the valid window. The receipt is used for the first time, so $\mathsf{rid} \notin \mathcal{R}_{\text{recent}}$. The honest sender uses the actual $p_i$ transferred by $\mathcal{A}$, so $h_i = \Hash(p_i)$ holds by construction.

Since none of the rejection conditions hold, the algorithm must return $1$, contradicting the assumption that $\mathcal{A}$ wins with $\Report$ returning $0$. Therefore, $\epsilon$ must be negligible.
\end{proof}

\subsection{Report soundness}
\label{subsec:report-soundness}

With registration authenticity and receipt non-repudiation established, we now address the core security property: users cannot inflate reputation beyond their actual contributions. This property prevents false reporting by requiring that every transfer be backed by a cryptographic acknowledgment from the counterparty.

\begin{definition}[Report soundness]
\label{def:report-soundness}
A PBTS scheme satisfies report soundness if for any PPT adversary $\mathcal{A}$ that interacts with honest peers and the tracker, the probability
\begin{align*}
\Pr\left[\begin{array}{l}
\mathsf{pp} \gets \Setup(1^\lambda); \\
(\mathsf{uid}_\mathcal{A}, \mathsf{pk}_\mathcal{A}) \gets \mathcal{A}^{\Register(\cdot), \Announce(\cdot), \text{Peers}}(\mathsf{pp}); \\
(\mathsf{up}_{\text{true}}, \mathsf{down}_{\text{true}}) \gets \mathsf{TrueStats}(\mathcal{A}); \\
\Report(\mathsf{uid}_\mathcal{A}, \mathsf{pk}_\mathcal{A}, \{\mathsf{pk}_j\}, \mathcal{T}, \{t_j\}, \sigma_{\mathsf{agg}}, \Delta_{\mathsf{up}}) = 1; \\
(\mathsf{up}_{\text{claimed}}, \mathsf{down}_{\text{claimed}}) \gets \SCRead(\mathsf{addr}, \mathsf{uid}_\mathcal{A}) : \\
\mathsf{up}_{\text{claimed}} > \mathsf{up}_{\text{true}} \lor \mathsf{down}_{\text{claimed}} < \mathsf{down}_{\text{true}}
\end{array}\right]
\end{align*}
is negligible in $\lambda$, where $\mathsf{TrueStats}(\mathcal{A})$ tracks actual data uploaded and downloaded by $\mathcal{A}$ to/from honest peers.
\end{definition}

\begin{theorem}
\label{thm:report-soundness}
If $\Sigma$ satisfies aggregate unforgeability, the TEE executes the tracker code correctly, and the smart contract enforces access control, then the PBTS construction satisfies report soundness.
\end{theorem}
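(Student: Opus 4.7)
The plan is to reduce to the aggregate unforgeability of $\Sigma$, splitting the winning event into its two disjuncts: upload inflation ($\mathsf{up}_{\text{claimed}} > \mathsf{up}_{\text{true}}$) and download suppression ($\mathsf{down}_{\text{claimed}} < \mathsf{down}_{\text{true}}$). I will construct a reduction $\mathcal{B}$ that, given a PPT adversary $\mathcal{A}$ winning report soundness with non-negligible advantage $\epsilon$, produces a forgery in the aggregate unforgeability game with related advantage.

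First I would have $\mathcal{B}$ simulate the full PBTS experiment for $\mathcal{A}$: running $\Setup$ honestly, emulating the TEE-protected tracker (justified by the TEE correctness assumption), simulating honest peer interactions, and embedding the aggregate-unforgeability challenger's public key into a uniformly chosen honest peer slot. Every honest receipt signed under the challenge key is obtained via the external signing oracle, and $\mathcal{B}$ maintains the multiset $\mathcal{Q}$ of message-key pairs that honest peers have actually signed throughout the simulation.

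For the upload-inflation branch I would trace every on-chain increment of $\mathcal{A}$'s upload counter back through $\Report$ (\cref{fig:pbts-report-timestamp}). Each such increment follows a successful $\Sigma.\AggVerify$, and by the contract's access control can only arise through a TEE-authenticated write. Since $\mathsf{up}_{\text{true}}$ counts only transfers actually received by honest peers, any excess necessarily corresponds to an accepted report whose aggregate signature covers at least one pair $(m^{*}, \mathsf{pk}^{*}) \notin \mathcal{Q}$. Extracting $m^{*}$ together with the aggregate $\sigma^{*}_{\mathsf{agg}}$ from that report yields a valid aggregate forgery whenever $\mathsf{pk}^{*}$ coincides with the embedded challenge key, which happens with probability at least $1/q_{H}$ where $q_{H}$ is the polynomial number of honest peers, giving the desired non-negligible advantage. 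For the download-suppression branch, I would argue that honest uploaders, by protocol, refuse to send a piece without a receipt signed under $\mathsf{pk}_{\mathcal{A}}$, and subsequently invoke $\Report$ with those receipts; by TEE correctness and contract integrity the tracker credits $\mathcal{A}$'s download counter by the transferred piece sizes, so any undershoot must stem either from a well-formed honest report being rejected despite a verifying signature (ruled out by the same analysis used for \cref{thm:receipt-nonrepud}) or from $\mathcal{A}$ causing a collision in $\mathcal{R}_{\text{recent}}$ ahead of that honest report.

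The main obstacle I expect is precisely this last interplay with the replay cache. Because $\mathsf{rid}_{j}$ binds the sender's public key together with piece index and epoch, $\mathcal{A}$ cannot collide with an honest sender's report by replaying its own receipts for the same transfer: different sender keys yield different $\mathsf{rid}$ values, so the cache-based rejection cannot be triggered by $\mathcal{A}$ on a transfer that really originated from an honest peer. To completely close the gap for reports that straddle epoch boundaries I would invoke the non-reusability guarantee established in \cref{subsec:receipt-nonreuse}. Combining the two branches, $\mathcal{A}$'s overall advantage is at most a polynomial multiple of $\mathcal{B}$'s aggregate forgery advantage, hence negligible by assumption.
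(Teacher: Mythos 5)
Your proposal is correct and follows essentially the same route as the paper: the same two-case split (upload inflation reduced to aggregate unforgeability, download suppression handled by the honest-uploader/tit-for-tat argument), with the same reliance on TEE correctness and contract access control. The only notable differences are cosmetic: the paper hands the reduction all $k$ honest-peer challenge keys and oracles directly (avoiding your $1/q_H$ guessing loss, which the multi-oracle aggregate-unforgeability definition in \cref{def:signature} permits), and your extra care about $\mathcal{R}_{\text{recent}}$ collisions in the download branch fills in a detail the paper leaves implicit.
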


\begin{proof}
The $\Report$ algorithm (\cref{fig:pbts-report-timestamp}) accepts a report only if the aggregated signature $\sigma_{\mathsf{agg}}$ verifies over the receipt set $\{(\mathsf{pk}_j, m_j)\}_{j \in \mathcal{J}}$, where each $m_j = (h_{\mathcal{T}} \parallel \mathsf{pk}_\mathcal{A} \parallel h_j \parallel j \parallel t_j)$ represents a receipt from peer $j$ acknowledging receipt of piece $j$ from $\mathcal{A}$ at time $t_j$.
Consider two cases:

\paragraphNoSkip{Case 1: Over-reporting uploads}
% ($\mathsf{up}_{\text{claimed}} > \mathsf{up}_{\text{true}}$)}
% 
For $\mathcal{A}$ to claim credit exceeding its actual uploads, it must provide valid receipts for pieces it never sent. This requires forging receipts from honest peers who never signed them. 
Let $\mathcal{B}$ be a PPT algorithm attacking aggregate unforgeability. $\mathcal{B}$ receives challenge public keys $\{\mathsf{pk}_1^*, \ldots, \mathsf{pk}_k^*\}$ for $k$ honest peers and access to signing oracles $\{\mathcal{O}_{\Sign}(\mathsf{sk}_i^*, \cdot)\}_{i \in [k]}$. $\mathcal{B}$ simulates the PBTS environment for $\mathcal{A}$, using the challenge keys as honest peers' public keys and the signing oracles to generate legitimate receipts when $\mathcal{A}$ uploads to honest peers.
When $\mathcal{A}$ submits a report claiming $\mathsf{up}_{\text{claimed}} > \mathsf{up}_{\text{true}}$, the report includes an aggregate signature $\sigma_{\mathsf{agg}}$ over receipts $\{(\mathsf{pk}_j, m_j)\}_{j \in \mathcal{J}}$. Since $\mathcal{A}$ over-reported uploads, at least one $(\mathsf{pk}_j^*, m_j^*)$ must correspond to an upload that never occurred, meaning $m_j^*$ was never queried to $\mathcal{O}_{\Sign}(\mathsf{sk}_j^*, \cdot)$. Thus $\sigma_{\mathsf{agg}}$ constitutes a forgery, and $\mathcal{B}$ outputs it to break aggregate unforgeability with the same advantage as $\mathcal{A}$.

\paragraphNoSkip{Case 2: Under-reporting downloads}
% ($\mathsf{down}_{\text{claimed}}<\mathsf{down}_{\text{true}}$)
% 
When $\mathcal{A}$ downloads piece $p_i$ from an honest peer $P$, it must generate $\sigma_{\mathsf{receipt}} = \Attest(\mathsf{sk}_\mathcal{A}, \mathsf{pk}_P, p_i, \mathcal{T}, t_{\text{epoch}})$ and return it to $P$. If $\mathcal{A}$ refuses to do so, honest peers detect non-cooperation and stop serving $\mathcal{A}$ (tit-for-tat enforcement). If $\mathcal{A}$ provides receipts, those can be submitted by honest uploaders, accurately tracking $\mathcal{A}$'s downloads.
Combining both cases, by $\Sigma$'s aggregate unforgeability:
% \begin{align*}
% \[
$
\mathsf{Adv}^{\mathsf{report}}_{\mathsf{PBTS},\mathcal{A}}(\lambda) \leq \mathsf{Adv}^{\mathsf{agg-forge}}_{\Sigma,\mathcal{B}}(\lambda) 
= \mathsf{negl}(\lambda)
$
% \]
% \end{align*}
\end{proof}

\subsection{Receipt non-reusability}
\label{subsec:receipt-nonreuse}

Even with report soundness ensuring that receipts correspond to genuine transfers, another threat remains: receipt reuse. An adversary might attempt to submit the same receipt multiple times across different reports, effectively claiming credit for the same upload repeatedly.

\begin{definition}[Receipt non-reusability]
\label{def:receipt-nonreuse}
A PBTS scheme satisfies receipt non-reusability if for any PPT adversary $\mathcal{A}$, the following probability is negligible in $\lambda$:
\begin{align*}
\Pr\left[\begin{array}{l}
\mathsf{pp} \gets \Setup(1^\lambda); \\
\sigma_{\mathsf{receipt}} \gets \mathcal{A}^{\Register(\cdot), \Announce(\cdot), \Report(\cdot)}(\mathsf{pp}); \\
(R_1, R_2) \gets \mathcal{A}(\sigma_{\mathsf{receipt}}) : \\
\sigma_{\mathsf{receipt}} \in R_1 \land \sigma_{\mathsf{receipt}} \in R_2 \\
\land\, \Report(R_1) = 1 \land \Report(R_2) = 1
\end{array}\right]
\end{align*}
\end{definition}

\begin{theorem}
\label{thm:receipt-nonreuse}
If $\Sigma$ is sUF-CMA secure and receipts include timestamps, then the PBTS construction satisfies receipt non-reusability.
\end{theorem}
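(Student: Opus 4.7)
The plan is to argue that any double-submission of the same receipt $\sigma_{\mathsf{receipt}}$ must fall into one of two cases, each of which leads to a contradiction. Let $\sigma_{\mathsf{receipt}}$ be the receipt produced by an honest receiver on a message $m = (h_{\mathcal{T}} \parallel \mathsf{pk}_{\mathsf{sender}} \parallel h_i \parallel i \parallel t_{\mathsf{epoch}})$, and suppose it appears in two reports $R_1$ and $R_2$ that both cause $\Report$ to return $1$. Each successful call to $\Report$ (\cref{fig:pbts-report-timestamp}) must (i) pass aggregate signature verification, (ii) satisfy the epoch-window check $t_{\mathsf{epoch},j} \in [t_{\mathsf{now}} - \Delta, t_{\mathsf{now}}]$, and (iii) satisfy the freshness check $\mathsf{rid}_j \notin \mathcal{R}_{\mathsf{recent}}$. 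I would build the proof around these three checks and the fact that the epoch timestamp is cryptographically bound into $m$.

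First I would handle the case where $R_1$ and $R_2$ both commit to the original epoch $t_{\mathsf{epoch}}$ used when the receipt was generated. Then the reconstructed receipt identifiers $\mathsf{rid}_1, \mathsf{rid}_2$ coincide on every component $(h_{\mathcal{T}}, \mathsf{pk}_{\mathsf{sender}}, \mathsf{pk}_{\mathsf{receiver}}, h_i, i, t_{\mathsf{epoch}})$. Since the TEE executes $\Report$ correctly and the garbage-collection policy for $\mathcal{R}_{\mathsf{recent}}$ is required to retain identifiers for at least the full validity horizon $\Delta$ (any shorter retention is excluded by the window check, which would have already rejected the receipt for expiry), the second submission must find $\mathsf{rid}_2 \in \mathcal{R}_{\mathsf{recent}}$ and return $0$, contradicting the assumption.

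Second I would handle the case where $R_2$ commits to some $t' \neq t_{\mathsf{epoch}}$. The tracker reconstructs the signed message as $m' = (h_{\mathcal{T}} \parallel \mathsf{pk}_{\mathsf{sender}} \parallel h_i \parallel i \parallel t')$ and runs $\Sigma.\AggVerify$ against this reconstruction. For $R_2$ to be accepted, a signature on $m' \neq m$ under $\mathsf{pk}_{\mathsf{receiver}}$ must verify. I would reduce this to EUF-CMA of $\Sigma$ by constructing a simulator $\mathcal{B}$ that embeds its challenge key as the honest receiver's key, answers adversary requests using its signing oracle to produce legitimate receipts only on queried messages (which have the correct epoch), and outputs $(m', \sigma')$ extracted from $R_2$ as its forgery. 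Because $t' \neq t_{\mathsf{epoch}}$, the message $m'$ was never queried to the oracle, yielding a valid EUF-CMA forgery.

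The main obstacle I anticipate is nailing down the interaction between the sliding window $\Delta$ and the garbage-collection policy of $\mathcal{R}_{\mathsf{recent}}$: the proof only goes through if $\mathcal{R}_{\mathsf{recent}}$ retains every identifier for at least as long as the window accepts it. I would make this explicit as a structural invariant of the construction, so that no receipt can simultaneously survive the expiry check and escape the deduplication set. Together with the EUF-CMA reduction for epoch tampering, this bounds the adversary's overall advantage by a negligible function, completing the proof.
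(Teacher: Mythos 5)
Your proof is correct, but it decomposes the problem differently from the paper. The paper's proof cases on the temporal separation of the two accepted reports: if $R_2$ arrives within $\Delta+1$ epochs of $R_1$, the identifier is still in $\mathcal{R}_{\text{recent}}$ and deduplication fires; if it arrives later, the expiry check rejects it because acceptance at $t_1$ already forces $t_{\text{epoch}} \leq t_1 < t_2 - \Delta$. It then handles timestamp tampering in a closing remark. You instead case on whether the second report commits to the signed epoch or a tampered one. Your Case 1 is arguably tighter: since both reports must pass the window check against the \emph{same} $t_{\text{epoch}}$, you get $t_2 - t_1 \leq \Delta$ for free, collapsing the paper's two temporal subcases into one, at the price of needing the retention invariant for $\mathcal{R}_{\text{recent}}$ --- which you correctly flag as a structural requirement the paper leaves implicit (its figure only says ``periodic garbage collection''). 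Making that invariant explicit is a genuine improvement. Your Case 2 reduction to EUF-CMA is the right idea, but note it only covers receipts signed by \emph{honest} receivers (where the challenge key can be embedded); if the receiver is adversarial, EUF-CMA gives nothing, and the argument must instead observe that a single fixed signature verifying under two distinct reconstructed messages contradicts the soundness of verification (for BLS, a collision in the message hash). The paper shares this implicit honest-receiver assumption, so this is a limitation of both arguments rather than a defect specific to yours.
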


\begin{proof}
Each receipt includes an epoch identifier in the signed message, that is, if $t_{\text{epoch}}$ represents the period during which the transfer occurred, we have:
$$m = (h_{\mathcal{T}} \parallel \mathsf{pk}_{\text{sender}} \parallel h_i \parallel i \parallel t_{\text{epoch}})$$
Time is divided into discrete epochs (e.g., hour-long windows), and $t_{\text{epoch}} = \lfloor t_{\text{current}} / W \rfloor$ where $W$ is the epoch width and $t_{\text{current}}$ is the time when the receipt is generated. The epoch timestamp is fixed at receipt generation: honest receivers sign the current epoch, so $t_{\text{epoch}}$ cannot be greater than the time $t_1$ when the receipt first appears in a report. The epoch timestamp is provided as input to the $\Attest$ algorithm and signed by the receiver.

The tracker enforces two mechanisms: (1) temporal validity, accepting only receipts with recent timestamps, and (2) short-term deduplication via a set $\mathcal{R}_{\text{recent}}$ of recently used receipt identifiers. Each receipt is uniquely identified by $(h_{\mathcal{T}}, \mathsf{pk}_{\text{sender}}, \mathsf{pk}_{\text{receiver}}, h_i, i, t_{\text{epoch}})$.

Say adversary $\mathcal{A}$ successfully reuses receipt $\sigma_{\mathsf{receipt}}$ with time $t_{\text{epoch}}$ by getting it accepted in reports $R_1$ and $R_2$ processed at times $t_1$ and $t_2$ where $t_1 < t_2$. Let the acceptance window at time $t$ cover epochs in range $[t - \Delta, t]$.

\paragraphNoSkip{Case 1: $t_2 \leq t_1 + \Delta + 1$}
The reports are within $\Delta + 1$ epochs of each other. After $R_1$ is processed at epoch $t_1$, the receipt identifier is added to $\mathcal{R}_{\text{recent}}$. Since $t_2 \leq t_1 + \Delta + 1$, the receipt remains in $\mathcal{R}_{\text{recent}}$ when $R_2$ is processed. The deduplication check detects the reuse and rejects $R_2$.

\paragraphNoSkip{Case 2: $t_2 > t_1 + \Delta + 1$}
For the receipt to be valid at time $t_2$, we need $t_{\text{epoch}} \geq t_2 - \Delta$. However, since the receipt was valid at time $t_1$, we have $t_{\text{epoch}} \leq t_1 < t_2 - \Delta - 1 < t_2 - \Delta$. This contradicts the requirement for acceptance at $t_2$, so the receipt is rejected as expired.

The adversary cannot forge receipts with future timestamps because honest receivers sign the current timestamp at generation time. Forging such a signature, or producing a distinct valid signature on an already-submitted receipt message, contradicts the sUF-CMA security of $\Sigma$. Similarly, modifying the timestamp in an existing receipt invalidates the signature.
Since both cases lead to rejection:
$\mathsf{Adv}^{\mathsf{reuse}}_{\mathsf{PBTS},\mathcal{A}}(\lambda) = 0$
\end{proof}

\section{Privacy and federation}
\label{sec:privacy}

Publishing reputation on a public blockchain brings verifiability and portability, but it also introduces privacy trade-offs that do not exist in a purely centralized tracker.
In a classical private tracker, reputation data sits in a private database controlled by the operator; in PBTS it is visible to anyone who can read the chain.
This section analyzes the resulting privacy risks and discusses mechanisms to mitigate them.
We also note that on-chain participation can be made opt-in by tracker operators: a tracker may allow users who do not wish to publish their public key on the smart contract to continue using it in classical mode, forfeiting the benefits of portable reputation, migration, and authenticated DHT fallback.

\paragraphNoSkip{Pseudonymity and its limits}
On-chain reputation records store only user IDs, public keys, and transfer counters.
No plaintext identifiers are published, so there is no direct link to real-world identities.
However, this anonymity is shallow.
Public keys are long-lived linkable identifiers: anyone who correlates a public key with an IP address (e.g., by observing the peer list returned by an $\Announce$ call, which contains $(pk, \mathsf{ip}, \mathsf{port})$ tuples) can permanently de-anonymize the corresponding user.
Once this link is established, the attacker can follow the user's activity across all torrents on the same tracker instance.
A user who reuses the same key across multiple tracker instances is additionally linkable across those instances.
Users should therefore rerandomize their public key for each tracker instance and avoid reusing keys from other contexts (e.g., cryptocurrency wallets or other identity systems), as any external correlation of the key immediately collapses the pseudonymity offered by on-chain storage.
\paragraphNoSkip{Swarm privacy}
Peers learn the mapping between on-chain identities and IP addresses when receiving the swarm member list.
We propose replacing long-term public keys in peer lists with short-lived ephemeral keys.
When joining a swarm, a peer generates an ephemeral key pair $(\mathsf{sk}_e, \mathsf{pk}_e)$ and authenticates to the tracker using its long-term key $\mathsf{sk}$, proving that $\Rep(\mathsf{up}, \mathsf{down}) \geq \mathsf{MinRep}$.
The tracker verifies the reputation, issues a signed session credential $\tau = \Sign(\mathsf{sk}_{\mathsf{tracker}}, \mathsf{pk}_e \parallel \mathsf{tid} \parallel t_{\mathsf{epoch}})$, and records $(\mathsf{pk}_e, \mathsf{ip}, \mathsf{port})$ in the swarm instead of the long-term key.
The mapping between the long-term key and $\mathsf{pk}_e$ is maintained privately by the tracker.
Peers receiving the list see only ephemeral keys; without the tracker's internal state, the link to on-chain identities cannot be reconstructed.
Receipts are signed with $\mathsf{sk}_e$; the peer privately provides the tracker with the binding $\mathsf{pk}_e \to \mathsf{uid}$ so that download credit is attributed to the correct account.
Note that this mechanism hides the long-term identity from other peers in the swarm, but the tracker itself must still identify the reporting peer: $\Report$ credits a specific account, so the tracker needs to know the binding $\mathsf{pk}_e \to \mathsf{uid}$.
A ZK proof that hides the account identity entirely from the tracker would break this attribution.
Under the trusted-operator model, the tracker maintains this binding in its private state.
Under the untrusted-operator model, the binding is processed inside the TEE enclave and never exposed to the operator.
In both cases, the ephemeral key scheme provides peer-facing privacy; the models differ only in whether the operator is trusted to keep the binding confidential.

\paragraphNoSkip{Confidential reputation}
A further limitation of the base design is that reputation counters are stored in plaintext on the blockchain: any observer can read the exact upload and download totals for any registered public key.
To address this, reputation values can be stored as commitments rather than cleartext.
Using a homomorphic commitment scheme (e.g., Pedersen commitments over an appropriate group), the contract stores $(C_{\mathsf{up}}, C_{\mathsf{down}})$ where $C_{\mathsf{up}} = \mathsf{Com}(\mathsf{up}; \rho_{\mathsf{up}})$ and $C_{\mathsf{down}} = \mathsf{Com}(\mathsf{down}; \rho_{\mathsf{down}})$ for randomness $(\rho_{\mathsf{up}}, \rho_{\mathsf{down}})$ known only to the client.

When submitting a $\Report$ claiming $\Delta_{\mathsf{up}}$ uploads, the client computes a new commitment $C_{\mathsf{up}}'$ to the updated value and provides a ZK proof $\pi$ that the update is consistent with the receipts:
\begin{align*}
  \pi \colon \exists\, \mathsf{up}, \rho, \rho' :\;&\;
  C_{\mathsf{up}} = \mathsf{Com}(\mathsf{up};\, \rho) \\
  \land\;&\;
  C_{\mathsf{up}}' = \mathsf{Com}(\mathsf{up} + \Delta_{\mathsf{up}};\, \rho').
\end{align*}
The tracker verifies $\pi$ together with the receipt aggregate signature, then writes $C_{\mathsf{up}}'$ to the contract.
No cleartext counter is written to the chain; correctness is enforced by the proof rather than by trusting the tracker's arithmetic.

To participate in a reputation-gated operation, the client constructs a ZK proof of the form
\begin{align*}
  \exists\, v, \rho, v', \rho' :\;&\;
  \mathsf{Com}(v;\rho) = C_{\mathsf{up}} \\
  \land\;&\; \mathsf{Com}(v';\rho') = C_{\mathsf{down}} \\
  \land\;&\; \Rep(v, v') \geq \mathsf{MinRep},
\end{align*}
without revealing exact counter values.
This provides confidentiality: observers see only commitments, while each client retains the opening of their own commitment and can produce proofs independently of the tracker.

\paragraphNoSkip{Plausible deniability}
A tracker operator wishing to offer its members some degree of plausible deniability could register decoy accounts using valid public keys sampled from the broader ledger (e.g., active addresses on the same blockchain), making it harder to assert with certainty that a given key belongs to an actual tracker member.
This raises the uncertainty for an observer trying to identify members from the on-chain allow-list alone.
Decoy entries are non-functional by construction: since reputation is stored as a commitment, participating in any reputation-gated operation requires proving knowledge of the opening $(v, \rho)$.
When registering a decoy, the tracker generates a commitment with randomness it immediately discards; no one can open it, so the entry confers no usable credential.
The consent concern remains: a third party's key is registered on a public ledger without their knowledge, although no false reputation or usable access is granted to them.

\paragraphNoSkip{DHT context}
When no tracker is reachable, peers fall back to the authenticated DHT for peer discovery (see \cref{subsec:dht-fallback}).
This setting offers strictly stronger privacy guarantees than the tracker-mediated case because verification is one-way: a DHT node only needs to confirm that an announcing peer is a legitimate member of the allow-list with sufficient reputation, without needing to credit any account.
This makes it possible to use a zero-knowledge membership proof: a peer proves that its key appears in the smart contract allow-list and that $\Rep(\mathsf{up}, \mathsf{down}) \geq \mathsf{MinRep}$, without revealing which entry it corresponds to.
No binding to specific accounts is required, so the proof fully decouples DHT participation from on-chain identity.
Reputation updates are deferred in this setting: since \texttt{SCWrite} is access-controlled to the tracker key, peers cannot update the contract directly.
Receipts are still collected peer-to-peer as usual via $\Attest$/$\Verify$, and peers hold them locally until a successor tracker comes online; at that point, they submit their accumulated receipts to the new instance, which processes them and writes the updated commitment to the contract.
Note that regular receipt signatures carry the signer's public key, which leaks identity when the deferred batch is submitted; a ZK attestation replacing the signature with a proof of knowledge and a per-transfer nullifier would preserve anonymity while still preventing double-spending, albeit at significant proving overhead per piece.
Similarly, IP addresses of participating peers remain visible to other DHT nodes, and hiding them would require external tools such as Tor.

\paragraphNoSkip{Federation}
The same ZK membership proof that enables anonymous DHT participation extends naturally to federated tracker deployments.
Federated tracker instances can each maintain their own reputation contract while mutually recognizing each other's members.
A peer registered on tracker $T_1$ can present a zero-knowledge proof of membership to tracker $T_2$ without revealing which instance they belong to or migrating their key.
Cross-tracker linkability is further prevented by rerandomizing the public key per tracker instance.
Federated instances inherit reputation through the referrer mechanism on migration, preserving continuity while each instance retains sovereignty over its own access control policy.

\section{Implementation and evaluation}
\label{sec:implementation}
\label{sec:evaluation}

The core components are implemented in Rust. Receipt generation and verification are exposed as BEP~10 protocol extensions, ensuring backward compatibility with existing BitTorrent clients. Smart contract interaction is handled by the \texttt{alloy} library. We first discuss two practical deployment concerns, then evaluate performance through micro-benchmarks and system-level simulations.

\subsection{Tracker liveness and duplication}
\label{sec:liveness}

Standard TEE deployments bind keys to hardware, making failover difficult. PBTS decouples key derivation from physical hardware via a decentralized, attested Key Management Service (KMS). Each tracker instance obtains its \textit{Tracker Root Key} ($key_{\mathrm{trk}}$), derived from the cryptographic measurement of the tracker binary and configuration:

\begin{enumerate*}[label=(\roman*)]
    \item The instance presents its TEE attestation to the KMS, which verifies the measurement against an allowlist of approved binaries.
    \item The KMS derives and returns $key_{\mathrm{trk}}$ for this code/configuration.
    \item All subsequent ephemeral keys (contract signing, state encryption, peer authentication) are derived from $key_{\mathrm{trk}}$.
\end{enumerate*}
\noindent

This yields three benefits.
\emph{Stateless recovery}: a replacement instance presents the same attestation, retrieves $key_{\mathrm{trk}}$, and resumes with full on-chain credentials.
\emph{Liveness via duplication}: multiple instances sharing $key_{\mathrm{trk}}$ operate in parallel for load balancing and redundancy.
\emph{Censorship resistance}: since no state is hardware-bound, taking down individual nodes does not disrupt the service.

\subsection{Secure blockchain RPC and state integrity}
\label{sec:blockchainrpc}

Tracker instances require \textit{liveness} (reads and writes to the smart contract always succeed) and \textit{integrity} (blockchain responses are authentic) with respect to blockchain interaction. We address these with three measures: (1) a prioritized list of fallback RPC endpoints across providers; (2) state reads verified against Merkle proofs over block headers, removing reliance on a single RPC provider; (3) contract writes signed with $key_{\mathrm{trk}}$-derived keys, with the contract enforcing that only attested instances can mutate state. Writes are treated as confirmed only after monitoring finality to handle chain reorganizations.

We now evaluate performance through micro-benchmarks and system-level simulations, addressing three questions: \emph{(1)} What is the cost of the core cryptographic operations? \emph{(2)} What is the end-to-end impact on client download throughput? \emph{(3)} What does the hybrid ECDSA/BLS scheme gain, and what are the on-chain costs?

\subsection{Experiment setup}
\label{subsec:setup}

Cryptographic benchmarks (BLS keypair generation, receipt signing and verification) use a Rust implementation with the \texttt{blst} library for BLS12-381~\cite{bls12-381}.
TEE experiments run on a confidential virtual machine hosted on Phala~\cite{phala}, equipped with 2 vCPUs, 4 GB of RAM, and Intel TDX.
Gas costs are measured using a local Ethereum node (Anvil).
Client download simulations vary download speed (1--20 MB/s) and piece size (256 KB and 2 MB); the attestation comparison uses a representative 5.1 GB torrent with 2,560 pieces of 2 MB each.
We open-sourced our implementation and made it available online.\footnote{\url{https://anonymous.4open.science/r/pbts-75AC/}}

\subsection{Micro-benchmarks}
\label{subsec:microbenchmarks}

We measure the baseline costs of the cryptographic operations that form the building blocks of our system. \Cref{tab:microbenchmarks} summarizes the latency of key operations.

\begin{table}[ht]
    \centering
    \caption{Micro-benchmark results (mean latency).}
    \label{tab:microbenchmarks}
    \begin{tabular}{lr}
        \toprule
        \textbf{Operation} & \textbf{Time (ms)} \\
        \midrule
        \multicolumn{2}{l}{\textit{Cryptographic Primitives (Rust/blst)}} \\
        BLS Keypair Generation & 0.22 \\
        Receipt Creation (Sign) & 0.84 \\
        Receipt Verification & 2.28 \\
        \midrule
        \multicolumn{2}{l}{\textit{TEE Operations (Intel TDX)}} \\
        Tracker Key Derivation (No TEE) & 0.21 \\
        Tracker Key Derivation (TEE) & 1.06$^\dagger$ \\
        Attestation Generation & 3.93 \\
        Attestation Verification & 435.25 \\
        \bottomrule
        \multicolumn{2}{l}{\footnotesize $^\dagger$Median reported; distribution is right-skewed due to occasional} \\
        \multicolumn{2}{l}{\footnotesize TDX scheduling delays.} \\
    \end{tabular}
\end{table}

\paragraphNoSkip{Receipt operations}
Using the Rust/blst implementation, receipt creation (BLS signing) takes 0.84 ms and verification takes 2.28 ms per receipt. These per-operation costs are low; the dominant overhead in practice comes from the receipt exchange round-trip rather than computation, as discussed in \cref{subsec:client-perf}.

\paragraphNoSkip{TEE overhead}
Running code inside the TEE introduces measurable overhead. Tracker key derivation inside the TDX enclave takes roughly 1.1 ms (median), compared to 0.2 ms outside (a $\approx5\times$ increase). This cost is paid once at startup and is not a bottleneck for user registration. Attestation generation is efficient ($\approx 4$ ms), while verification takes $\approx 435$ ms, a one-time cost paid during tracker registration or migration, not per-transaction. This latency includes the network round-trip to the Intel Attestation Service (IAS) for quote verification.

\subsection{Client download performance}
\label{subsec:client-perf}

To understand the impact on user experience, we simulated file downloads under various network conditions and piece sizes. The primary metric is \emph{throughput reduction}: the percentage loss in effective download speed due to the time spent generating and exchanging receipts.

\begin{figure}[ht]
    \centering
    \includegraphics[width=0.85\linewidth]{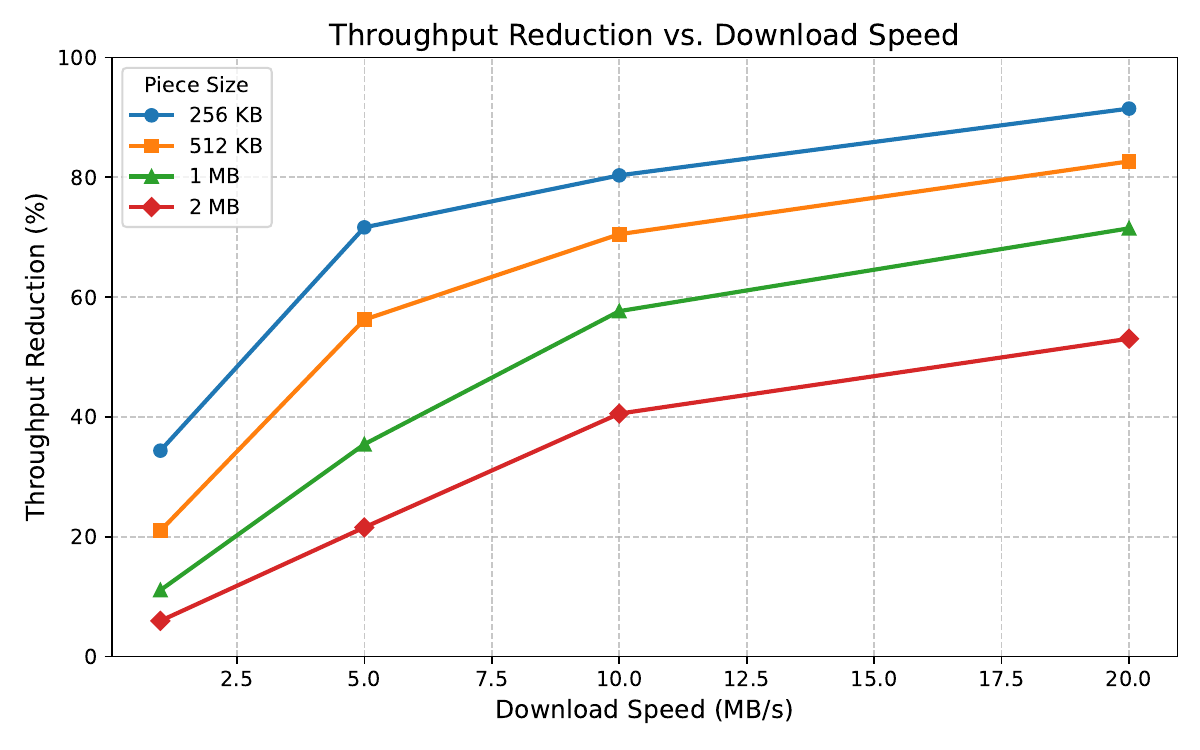}
    \caption{Throughput reduction as a function of download speed under different piece sizes (Batch Size = 1). Smaller pieces incur higher overhead due to more frequent receipt exchanges.}
    \label{fig:throughput_overhead}
\end{figure}

\paragraphNoSkip{Concrete example}
Consider a user downloading a 1 GB file at 1 MB/s. With 256 KB pieces, the client generates 4096 receipts; at 0.84 ms each, local signing takes 3.4 s out of a 1000 s baseline. With 2 MB pieces, only 512 receipts are needed, reducing signing cost to 0.43 s. In both cases cryptographic cost alone is under 0.5\%, yet the observed end-to-end overhead reaches 4.2\% for 2 MB pieces (\cref{fig:throughput_overhead}), meaning the remaining overhead comes from the receipt exchange round-trip with the remote peer.

\paragraphNoSkip{Impact of network speed}
As download speed increases, pieces arrive faster and the per-piece receipt exchange accounts for a growing fraction of transfer time.
At 20 MB/s with 256 KB pieces, a piece arrives every 12.8 ms; thus each receipt round-trip constrains throughput.
Larger piece sizes reduce how often receipts must be exchanged and keep overhead manageable.
For peers operating at high bandwidth with small pieces, the hybrid ECDSA/BLS scheme (see \cref{subsec:analytical-opt}) lowers piece signing cost and thus the time spent on each exchange.

\subsection{Hybrid ECDSA/BLS scheme}
\label{subsec:analytical-opt}

The key tension in PBTS is that BLS supports aggregation (making it ideal for batch verification at report time) but is too slow for per-piece signing during live transfers, where each piece exchange adds latency directly visible to the user. ECDSA is roughly $10\times$ faster per operation but is not aggregatable. The hybrid scheme resolves this by assigning each primitive to the role it is best suited for: ECDSA for per-piece attestation during transfer, where each exchange begins with a handshake that roots the ephemeral ECDSA keypair in the receiver's long-term BLS identity, and BLS for aggregate verification when the uploader reports to the tracker.

At report time, BLS aggregation reduces the tracker's verification cost to a single pairing operation: 12.57 ms for 100 receipts, an 18$\times$ speedup over individual verification ($\approx$226 ms, see \Cref{tab:agg-speedup}). The speedup grows with batch size, reaching 22.6$\times$ at 500 receipts (\Cref{tab:agg-speedup}). Per-piece ECDSA signing during transfer costs $\approx$0.08 ms per piece, keeping the per-exchange overhead minimal.

\begin{table}[ht]
    \centering
    \caption{BLS aggregate verification speedup vs.\ individual verification (Rust/blst).}
    \label{tab:agg-speedup}
    \begin{tabular}{rrrr}
        \toprule
        \textbf{Batch} & \textbf{Aggregate (ms)} & \textbf{Individual (ms)} & \textbf{Speedup} \\
        \midrule
        10  &  3.78  &   22.77  &  6.0$\times$ \\
        25  &  5.85  &   56.79  &  9.7$\times$ \\
        50  &  8.04  &  116.08  & 14.4$\times$ \\
        100 & 12.57  &  226.29  & 18.0$\times$ \\
        500 & 49.90  & 1126.75  & 22.6$\times$ \\
        \bottomrule
    \end{tabular}
\end{table}

\Cref{tab:attest-comparison} quantifies the hybrid scheme for a 5.1 GB torrent. Using BLS for every piece costs 2.16 s of signing time; switching to ECDSA per-piece reduces this to under 0.22 s while preserving per-piece security guarantees. \Cref{fig:projected_optimizations} projects the throughput overhead across file sizes: the hybrid scheme brings overhead close to that of adaptive-frequency or batch-signing strategies, while maintaining stronger per-piece verifiability. Under concurrent load, tracker-side report processing latency remains bounded at 5.6--8.5 ms; since reports are submitted at epoch boundaries, this is only relevant for peers with borderline reputation who need an immediate update.

\begin{figure}[ht]
    \centering
    \includegraphics[width=0.85\linewidth]{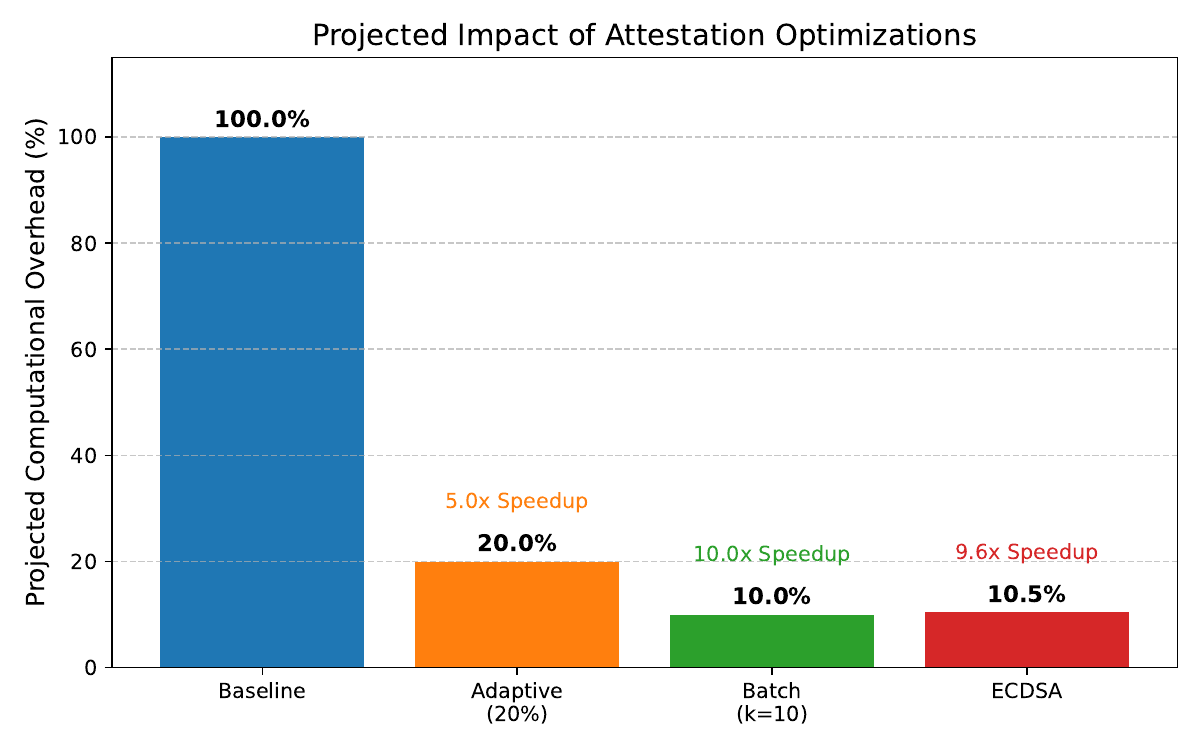}
    \caption{Projected signing overhead relative to baseline (per-piece BLS) across file sizes. The hybrid ECDSA/BLS scheme matches or outperforms batching strategies while preserving per-piece attestation.}
    \label{fig:projected_optimizations}
\end{figure}

\subsection{Smart contract gas costs}
\label{subsec:gas-costs}

We measured gas consumption for each PBTS smart contract operation. \Cref{tab:gas-costs} reports the results.

\begin{table}[ht]
    \centering
    \caption{Smart contract gas costs and observed latency.}
    \label{tab:gas-costs}
    \begin{tabular}{lrr}
        \toprule
        \textbf{Operation} & \textbf{Gas} & \textbf{Latency (ms)} \\
        \midrule
        Create reputation contract & 1{,}270{,}419 & 31.1 \\
        Add user (\texttt{addUser})     &   119{,}522 & 22.1 \\
        Update reputation (\texttt{updateUser}) &    55{,}715 & 19.9 \\
        Migrate user data               &   157{,}749 & 19.4 \\
        \bottomrule
    \end{tabular}
\end{table}

\paragraphNoSkip{Migration cost and failover}
Tracker migration is triggered by deploying a new reputation contract that references a preceding one (\texttt{migrateUserData}).
The 157{,}749 gas migration call is a one-time fee per tracker transition: at current Ethereum L1 conditions ($\approx$0.14 gwei, $\approx$\$2{,}055/ETH) this amounts to $\approx$\$0.045 per transition. No ledger data is lost: the old contract remains accessible as a read-only predecessor. After the new contract is live, peers can immediately re-register and resume reporting; the efficient DHT bootstrap ensures peer discovery resumes within seconds of the tracker restarting.

\paragraphNoSkip{Economic considerations}
Contract creation and user registration (\texttt{addUser}) are one-time costs.
The main recurring expense is reputation updates (\texttt{updateUser}, 55{,}715 gas).
Because the tracker batches updates at epoch boundaries, on-chain cost scales with update frequency, not transfer volume.
Since updates can be scheduled for low-traffic periods, the effective update cost is low: at current Ethereum L1 conditions ($\approx$0.14 gwei, $\approx$\$2{,}055/ETH), one update costs $\approx$\$0.016. For a community of 1{,}000 users updating weekly, this amounts to under \$1{,}000/year on L1. Deploying on a Layer-2 (e.g., Optimism, Arbitrum) reduces costs by a further 10--100$\times$.
Overall, on-chain costs scale linearly with community size and update frequency, remaining viable even for large communities.
Peers requiring an out-of-schedule update (for instance, ahead of a tracker migration) can pay the gas cost directly; the tracker submits the transaction on their behalf.

\section{Conclusion}
\label{section:Conclusion}
Private BitTorrent trackers suffer from three structural weaknesses: reputation is siloed and lost when a tracker shuts down, centralized servers are single points of failure for both reputation storage and peer discovery, and transfer statistics are self-reported and unverifiable.
PBTS addresses each in turn: factory-deployed smart contracts make reputation portable and resilient to takedowns; an authenticated DHT fallback maintains peer discovery when no tracker is reachable; and cryptographic receipts aggregated via BLS signatures replace self-reported statistics with an auditable on-chain record.
We formalize PBTS, prove four security properties under standard cryptographic assumptions, analyze the privacy risks introduced by anchoring identities on a public ledger and propose several mitigations, and demonstrate its practicality: receipt signing adds under 0.5\% local overhead with end-to-end throughput loss below 5\%; BLS aggregation reduces tracker-side verification cost by up to 22.6$\times$; and on-chain updates remain economical when batched at epoch boundaries.
More broadly, PBTS demonstrates that lightweight blockchain integration can retrofit accountability and resilience into existing peer-to-peer protocols without requiring participants to abandon familiar client software or trust a central authority.

\printbibliography[heading=bibintoc]

\appendix

\section{Initialization and migration workflow}
\label{app:workflow}

This appendix complements the formal specification in \Cref{sec:Persistent BitTorrent Tracker}.
\Cref{fig:workflow} identifies four principals: the \texttt{RepFactory} smart contract, the Reputation Smart Contract, the Tracker (running inside an optional TEE), and BitTorrent peers.
If the tracker operator is trusted, the TEE and attestation step are not needed; all remaining guarantees (portable on-chain reputation, verifiable cryptographic receipts, and transparent migration) hold unchanged.

\begin{figure}[ht]
  \centering
  \includegraphics[width=\linewidth]{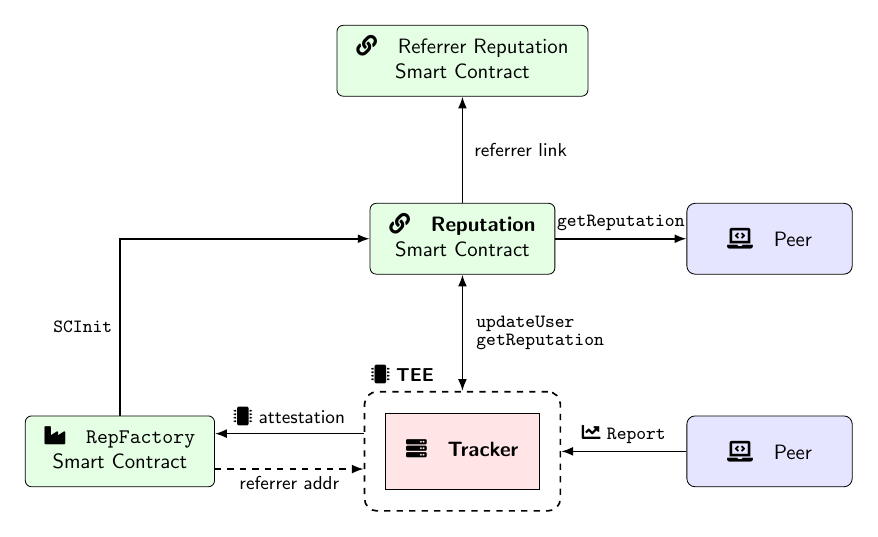}
  \caption{End-to-end workflow of PBTS initialization and operation.
    \textbf{Initialization (left):} The tracker instance, running inside a TEE, submits its TEE attestation to \texttt{RepFactory} together with the address of the predecessor reputation contract (\emph{referrer addr}).
    \texttt{RepFactory} verifies the attestation, then deploys a new Reputation Smart Contract via \texttt{SCInit}, embedding a referrer link to the predecessor so that prior reputation history remains accessible.
    \textbf{Normal operation (centre):} The tracker reads and writes peer reputation on the Reputation Smart Contract via \texttt{getReputation} and \texttt{updateUser}.
    Uploading peers submit aggregated BLS receipts to the tracker via \texttt{Report}; the tracker verifies the aggregate and calls \texttt{updateUser} to credit both uploader and downloaders.
    \textbf{Reputation reads (right):} Any peer, including those operating over the DHT fallback, can call \texttt{getReputation} directly on the smart contract without involving the tracker.
    \textbf{Migration:} When a tracker shuts down, its successor repeats the initialization flow; the referrer link in the new contract allows clients to walk the chain of contracts and reconstruct full reputation history.
  }
  \label{fig:workflow}
\end{figure}

\paragraphNoSkip{Principals}
\texttt{RepFactory} is a publicly deployed, immutable factory contract.
It is the single gatekeeper that controls which tracker instances may deploy reputation contracts, optionally enforcing TEE attestation in the untrusted-operator model.
The Reputation Smart Contract is deployed by \texttt{RepFactory} on behalf of a tracker; it is write-accessible only to the owning tracker instance (via \texttt{updateUser}) and read-accessible to anyone (\texttt{getReputation} is a free view call that incurs no gas cost).
The Tracker processes peer registrations, announce requests, and upload reports; in the untrusted-operator variant it runs inside a TEE so that neither the operator nor the host OS can observe peer IP addresses or tamper with reputation updates.
Peers interact with the tracker for swarm discovery and report submission, and directly with the smart contract for reputation reads.

\paragraphNoSkip{Initialization}
When a new tracker instance starts, it contacts \texttt{RepFactory} with the on-chain address of any predecessor reputation contract it wishes to reference as \emph{referrer addr}.
If the tracker operator is untrusted, the tracker also presents a TEE attestation (a hardware-signed measurement of the tracker binary and configuration). \texttt{RepFactory} verifies this attestation against an approved allowlist before proceeding.
If the tracker operator is trusted, no attestation is required; the tracker authenticates via its on-chain identity (e.g., a signature from a pre-approved address).
In either case, \texttt{RepFactory} then calls \texttt{SCInit} to deploy a fresh Reputation Smart Contract whose referrer pointer is set to the provided address.
The tracker is returned the address of the newly deployed contract and begins accepting peer connections.

\paragraphNoSkip{Normal operation}
Peers register with the tracker by proving ownership of their on-chain public key.
When a peer joins a swarm, the tracker calls \texttt{getReputation} to check that the peer's reputation meets the community threshold before granting access.
After uploading pieces, a peer collects cryptographic receipts from each downloader and submits them in a batch \texttt{Report}.
The tracker verifies the aggregate BLS signature over all receipts and calls \texttt{updateUser} once per epoch boundary to commit the updated upload and download counters to the contract.

\paragraphNoSkip{Migration}
When a tracker becomes unavailable, a successor tracker starts the initialization flow with the old contract's address as \emph{referrer addr}.
Peers resolve the current active contract by reading \texttt{RepFactory}'s registry and, if needed, following the referrer chain.
Because reputation data in old contracts is immutable and publicly readable, no data is lost during migration; peers seamlessly continue accumulating reputation on the new contract while their historical record remains anchored to the predecessor.

\begin{filecontents}{main.bib}
@inproceedings{5569968,
  author    = {Chen, X. and Jiang, Y. and Chu, X.},
  booktitle = {2010 IEEE Tenth International Conference on Peer-to-Peer Computing (P2P)},
  title     = {Measurements, Analysis and Modeling of Private Trackers},
  year      = {2010},
  volume    = {},
  number    = {},
  pages     = {1-10},
  keywords  = {Peer to peer computing;Viscosity;Book reviews;IEEE Communications Society;Communities;Internet;Joining processes},
  doi       = {10.1109/P2P.2010.5569968},
}

@article{adar2000free,
  author    = {Adar, Eytan and Huberman, Bernardo A.},
  journal   = {First Monday},
  title     = {Free riding on {Gnutella}},
  year      = {2000},
  issn      = {1396-0466},
  month     = oct,
  copyright = {Copyright (c)},
  doi       = {10.5210/fm.v5i10.792},
  language  = {en},
}

@misc{amd2020secure,
  title  = {Secure {{Encrypted Virtualization API Version}} 0.24},
  author = {AMD},
  year   = {2020},
  url    = {https://www.amd.com/content/dam/amd/en/documents/epyc-technical-docs/programmer-references/55766_SEV-KM_API_Specification.pdf},
}

@article{AwerbuchS09,
  author    = {Baruch Awerbuch and
               Christian Scheideler},
  title     = {Towards a Scalable and Robust {DHT}},
  journal   = {Theory Comput. Syst.},
  volume    = {45},
  number    = {2},
  pages     = {234--260},
  year      = {2009},
  url       = {https://doi.org/10.1007/s00224-008-9099-9},
  doi       = {10.1007/S00224-008-9099-9},
  timestamp = {Sun, 28 May 2017 13:18:25 +0200},
  biburl    = {https://dblp.org/rec/journals/mst/AwerbuchS09.bib},
  bibsource = {dblp computer science bibliography, https://dblp.org},
}

@inproceedings{BernsteinHKNRS19,
  author    = {Daniel J. Bernstein and
               Andreas H{\"{u}}lsing and
               Stefan K{\"{o}}lbl and
               Ruben Niederhagen and
               Joost Rijneveld and
               Peter Schwabe},
  editor    = {Lorenzo Cavallaro and
               Johannes Kinder and
               XiaoFeng Wang and
               Jonathan Katz},
  title     = {The SPHINCS\({}^{\mbox{+}}\) Signature Framework},
  booktitle = {Proceedings of the 2019 {ACM} {SIGSAC} Conference on Computer and
               Communications Security, {CCS} 2019, London, UK, November 11-15, 2019},
  pages     = {2129--2146},
  publisher = {{ACM}},
  year      = {2019},
  doi       = {10.1145/3319535.3363229},
  timestamp = {Tue, 29 Dec 2020 18:30:42 +0100},
  biburl    = {https://dblp.org/rec/conf/ccs/BernsteinHKNRS19.bib},
  bibsource = {dblp computer science bibliography, https://dblp.org},
}

@misc{bls-aggregation,
  title        = {BLS12-381 Aggregation},
  author       = {Benjamin Edgington},
  year         = {2025},
  howpublished = {\url{https://hackmd.io/@benjaminion/bls12-381\#Aggregation}},
  note         = {Accessed: 2025-01-19},
}

@misc{bls12-381,
  title        = {New SNARK Curve},
  author       = {Sean Bowe},
  year         = {2017},
  howpublished = {\url{https://electriccoin.co/blog/new-snark-curve}},
  note         = {Accessed: 2025-01-19},
}

@inproceedings{bocovich2024snowflake,
  author    = {Cecylia Bocovich and Arlo Breault and David Fifield and Serene and Xiaokang Wang},
  booktitle = {33rd USENIX Security Symposium},
  title     = {Snowflake, a censorship circumvention system using temporary {WebRTC} proxies},
  year      = {2024},
  address   = {Philadelphia, PA},
  month     = aug,
  pages     = {2635--2652},
  publisher = {USENIX Association},
  url       = {https://www.usenix.org/conference/usenixsecurity24/presentation/bocovich},
}

@inproceedings{BonehDN18,
  author    = {Dan Boneh and
               Manu Drijvers and
               Gregory Neven},
  editor    = {Thomas Peyrin and
               Steven D. Galbraith},
  title     = {Compact Multi-signatures for Smaller Blockchains},
  booktitle = {24th International Conference
               on the Theory and Application of Cryptology and Information Security},
  series    = {Lecture Notes in Computer Science},
  volume    = {11273},
  pages     = {435--464},
  publisher = {Springer},
  year      = {2018},
  doi       = {10.1007/978-3-030-03329-3\_15},
  timestamp = {Tue, 14 May 2019 10:00:40 +0200},
  biburl    = {https://dblp.org/rec/conf/asiacrypt/BonehDN18.bib},
  bibsource = {dblp computer science bibliography, https://dblp.org},
}

@inproceedings{BonehLS01,
  author    = {Dan Boneh and
               Ben Lynn and
               Hovav Shacham},
  editor    = {Colin Boyd},
  title     = {Short Signatures from the Weil Pairing},
  booktitle = {7th International Conference
               on the Theory and Application of Cryptology and Information Security (ASIACRYPT)},
  series    = {Lecture Notes in Computer Science},
  volume    = {2248},
  publisher = {Springer},
  year      = {2001},
  doi       = {10.1007/3-540-45682-1\_30},
  timestamp = {Tue, 01 Jun 2021 15:22:33 +0200},
  biburl    = {https://dblp.org/rec/conf/asiacrypt/BonehLS01.bib},
  bibsource = {dblp computer science bibliography, https://dblp.org},
}

@inproceedings{BortnikovGKKS08,
  author    = {Edward Bortnikov and
               Maxim Gurevich and
               Idit Keidar and
               Gabriel Kliot and
               Alexander Shraer},
  editor    = {Rida A. Bazzi and
               Boaz Patt{-}Shamir},
  title     = {Brahms: byzantine resilient random membership sampling},
  booktitle = {Proceedings of the Twenty-Seventh Annual  Symposium on Principles
               of Distributed Computing},
  publisher = {{ACM}},
  year      = {2008},
  doi       = {10.1145/1400751.1400772},
  timestamp = {Tue, 06 Nov 2018 11:07:18 +0100},
  biburl    = {https://dblp.org/rec/conf/podc/BortnikovGKKS08.bib},
  bibsource = {dblp computer science bibliography, https://dblp.org},
}

@misc{brooks2009bittorrhacks,
  author = {David Brooks and David Aslanian},
  title  = {BitTorrent Protocol Abuses},
  year   = {2009},
  url    = {https://www.blackhat.com/presentations/bh-usa-09/BROOKS/BHUSA09-Brooks-BitTorrHacks-PAPER.pdf},
}
 
@misc{buterin2013ethereum,
  author       = {Vitalik Buterin},
  title        = {Ethereum: A next-generation smart contract and decentralized application platform},
  year         = {2013},
  howpublished = {\url{https://ethereum.org/whitepaper}},
  note         = {Accessed: 2025-01-19},
}

@misc{buterin2022some,
  author = {Buterin, Vitalik},
  month  = jun,
  title  = {Some ways to use ZK-SNARKs for privacy},
  year   = {2022},
  url    = {https://vitalik.eth.limo/general/2022/06/15/using_snarks.html},
}

@article{cheng2022study,
  author    = {Cheng, Yukun and Deng, Xiaotie and Li, Yuhao},
  journal   = {Asia-Pacific Journal of Operational Research},
  title     = {Study on {Agent} {Incentives} for {Resource} {Sharing} on {P2P} {Networks}},
  year      = {2022},
  issn      = {0217-5959},
  month     = jun,
  number    = {03},
  pages     = {2150031},
  volume    = {39},
  doi       = {10.1142/S0217595921500317},
  keywords  = {Resource sharing, mechanism design, incentive ratio, P2P network, Sybil attack, Resource sharing, mechanism design, incentive ratio, P2P network, Sybil attack},
  publisher = {World Scientific Publishing Co.},
}
 
@article{cheng2023truthfulness,
  title     = {Truthfulness of a {{Network Resource-Sharing Protocol}}},
  author    = {Cheng, Yukun and Deng, Xiaotie and Qi, Qi and Yan, Xiang},
  year      = 2023,
  month     = aug,
  journal   = {Mathematics of Operations Research},
  volume    = {48},
  number    = {3},
  pages     = {1522--1552},
  publisher = {INFORMS},
  issn      = {0364-765X},
  doi       = {10.1287/moor.2022.1310},
}

@article{cheng2024intel,
  title      = {Intel {{TDX Demystified}}: {{A Top-Down Approach}}},
  shorttitle = {Intel {{TDX Demystified}}},
  author     = {Cheng, Pau-Chen and Ozga, Wojciech and Valdez, Enriquillo and Ahmed, Salman and Gu, Zhongshu and Jamjoom, Hani and Franke, Hubertus and Bottomley, James},
  year       = {2024},
  month      = apr,
  journal    = {ACM Comput. Surv.},
  volume     = {56},
  number     = {9},
  pages      = {238:1--238:33},
  issn       = {0360-0300},
  doi        = {10.1145/3652597},
}

@article{cheng2024tight,
  author   = {Cheng, Yukun and Deng, Xiaotie and Li, Yuhao and Yan, Xiang},
  journal  = {Games and Economic Behavior},
  title    = {Tight incentive analysis of {Sybil} attacks against the market equilibrium of resource exchange over general networks},
  year     = {2024},
  issn     = {0899-8256},
  month    = nov,
  pages    = {566--610},
  volume   = {148},
  doi      = {10.1016/j.geb.2024.10.009},
  keywords = {Resource exchange network, Incentive analysis, Market equilibrium, Sybil attack},
}

@misc{cohen2003incentives,
  author = {Cohen, Bram},
  month  = may,
  title  = {Incentives Build Robustness in BitTorrent},
  year   = {2003},
  url    = {https://stuker.com/wp-content/uploads/import/i-1fd3ae7c5502dfddfe8b2c7acdefaa5e-bittorrentecon.pdf},
}

@inproceedings{DBLP:conf/ccs/KuvaiskiiSQXBV24,
  author    = {Kuvaiskii, Dmitrii and Stavrakakis, Dimitrios and Qin, Kailun and Xing, Cedric and Bhatotia, Pramod and Vij, Mona},
  title     = {Gramine-TDX: A Lightweight OS Kernel for Confidential VMs},
  year      = {2024},
  publisher = {Association for Computing Machinery},
  address   = {New York, NY, USA},
  doi       = {10.1145/3658644.3690323},
  booktitle = {Proceedings of the ACM SIGSAC Conference on Computer and Communications Security},
  pages     = {4598–4612},
}

@inproceedings{DBLP:conf/eurosp/ChengZKHHJJ0S19,
  author    = {Raymond Cheng and
               Fan Zhang and
               Jernej Kos and
               Warren He and
               Nicholas Hynes and
               Noah M. Johnson and
               Ari Juels and
               Andrew Miller and
               Dawn Song},
  title     = {Ekiden: {A} Platform for Confidentiality-Preserving, Trustworthy,
               and Performant Smart Contracts},
  booktitle = {European Symposium on Security and Privacy (EuroS{\&}P)},
  pages     = {185--200},
  publisher = {{IEEE}},
  year      = {2019},
  doi       = {10.1109/EUROSP.2019.00023},
  timestamp = {Tue, 09 Jun 2020 08:58:15 +0200},
  biburl    = {https://dblp.org/rec/conf/eurosp/ChengZKHHJJ0S19.bib},
  bibsource = {dblp computer science bibliography, https://dblp.org},
}

@inproceedings{DBLP:conf/sp/ZhuH0WCZWZYZM20,
  author    = {Jianping Zhu and
               Rui Hou and
               XiaoFeng Wang and
               Wenhao Wang and
               Jiangfeng Cao and
               Boyan Zhao and
               Zhongpu Wang and
               Yuhui Zhang and
               Jiameng Ying and
               Lixin Zhang and
               Dan Meng},
  title     = {Enabling Rack-scale Confidential Computing using Heterogeneous Trusted
               Execution Environment},
  booktitle = {2020 {IEEE} Symposium on Security and Privacy, {SP} 2020, San Francisco,
               CA, USA, May 18-21, 2020},
  pages     = {1450--1465},
  publisher = {{IEEE}},
  year      = {2020},
  doi       = {10.1109/SP40000.2020.00054},
  timestamp = {Thu, 24 Jul 2025 21:52:19 +0200},
  biburl    = {https://dblp.org/rec/conf/sp/ZhuH0WCZWZYZM20.bib},
  bibsource = {dblp computer science bibliography, https://dblp.org},
}

@article{DBLP:journals/popets/LiWWGR22,
  author    = {Rujia Li and
               Qin Wang and
               Qi Wang and
               David Galindo and
               Mark Ryan},
  title     = {SoK: TEE-Assisted Confidential Smart Contract},
  journal   = {Proc. Priv. Enhancing Technol.},
  volume    = {2022},
  number    = {3},
  pages     = {711--731},
  year      = {2022},
  doi       = {10.56553/POPETS-2022-0093},
  timestamp = {Wed, 13 Aug 2025 15:35:38 +0200},
  biburl    = {https://dblp.org/rec/journals/popets/LiWWGR22.bib},
  bibsource = {dblp computer science bibliography, https://dblp.org},
}

@misc{dolan2007powergeek,
  title  = {The {{Powergeek}} 25 --- the {{Most Influential People}} in {{Online Music}} - {{Blender}}},
  author = {Dolan, Jon and Levine, Rob and Sisario, Ben and Wolk, Douglas},
  year   = 2007,
  url    = {https://web.archive.org/web/20101221224758/http://www.blender.com/lists/68786/powergeek-25-151-most-influential-people-in-online-music.html?p=2},
}

@inproceedings{douceur2002sybil,
  title     = {The Sybil Attack},
  booktitle = {Peer-to-Peer Systems},
  author    = {Douceur, John R.},
  editor    = {Druschel, Peter and Kaashoek, Frans and Rowstron, Antony},
  year      = {2002},
  pages     = {251--260},
  publisher = {Springer Berlin Heidelberg},
  address   = {Berlin, Heidelberg},
  isbn      = {978-3-540-45748-0},
}

@inproceedings{druschel2001largescale,
  title      = {{{PAST}}: A Large-Scale, Persistent Peer-to-Peer Storage Utility},
  shorttitle = {{{PAST}}},
  booktitle  = {Proceedings {{Eighth Workshop}} on {{Hot Topics}} in {{Operating Systems}}},
  author     = {Druschel, P. and Rowstron, A.},
  year       = 2001,
  month      = may,
  pages      = {75--80},
  doi        = {10.1109/HOTOS.2001.990064},
}

@misc{dstack-paper,
  author = {Shunfan Zhou and Kevin Wang and Hang Yin},
  title  = {Dstack: {A} {Zero} {Trust} {Framework} for {Confidential} {Containers}},
  year   = {2025},
  month  = sep,
  url    = {https://arxiv.org/abs/2509.11555},
}

@misc{fisher2007oinks,
  title   = {{{OiNK}}?S New Piglets Proof Positive That {{Big Content}}?S Efforts Often Backfire},
  author  = {Fisher, Ken},
  year    = 2007,
  journal = {Ars Technica},
  url     = {https://arstechnica.com/tech-policy/2007/11/oinks-new-piglets-proof-positive-that-big-contents-efforts-often-backfire},
}

@inproceedings{golle2001incentives,
  title     = {Incentives for Sharing in Peer-to-Peer Networks},
  booktitle = {Proceedings of the 3rd Conference on {{Electronic Commerce}}},
  author    = {Golle, Philippe and {Leyton-Brown}, Kevin and Mironov, Ilya},
  year      = {2001},
  series    = {{{EC}} '01},
  pages     = {264--267},
  publisher = {ACM},
  address   = {NY, USA},
  doi       = {10.1145/501158.501193},
}

@misc{halaburda2025platform,
  author = {Halaburda, Hanna and Livshits, Benjamin and Yaish, Aviv},
  title  = {Platform {Building} {With} {Fake} {Consumers}: {On} {Double} {Dippers} and {Airdrop} {Farmers}},
  year   = {2025},
  doi    = {10.2139/ssrn.5364583},
}

@inproceedings{hales2009bittorrent,
  author    = {Hales, David and Rahman, Rameez and Zhang, Boxun and Meulpolder, Michel and Pouwelse, Johan},
  booktitle = {18th IEEE International Workshops on Enabling Technologies: Infrastructures for Collaborative Enterprises},
  title     = {{BitTorrent} or {BitCrunch}: {Evidence} of a {Credit} {Squeeze} in {BitTorrent}?},
  year      = {2009},
  pages     = {99--104},
  doi       = {10.1109/WETICE.2009.22},
}

@article{JohnsonMV01,
  author    = {Don Johnson and
               Alfred Menezes and
               Scott A. Vanstone},
  title     = {The Elliptic Curve Digital Signature Algorithm {(ECDSA)}},
  journal   = {Int. J. Inf. Sec.},
  volume    = {1},
  number    = {1},
  pages     = {36--63},
  year      = {2001},
  doi       = {10.1007/S102070100002},
  timestamp = {Wed, 14 Nov 2018 10:36:00 +0100},
  biburl    = {https://dblp.org/rec/journals/ijisec/JohnsonMV01.bib},
  bibsource = {dblp computer science bibliography, https://dblp.org},
}

@inproceedings{jun2005incentives,
  author    = {Jun, Seung and Ahamad, Mustaque},
  title     = {Incentives in BitTorrent Induce Free Riding},
  year      = {2005},
  booktitle = {Proceedings of the 2005 ACM SIGCOMM Workshop on Economics of Peer-to-Peer Systems},
  date      = {2005-08-22},
  series    = {P2PECON '05},
  pages     = {116--121},
  publisher = {Association for Computing Machinery},
  location  = {New York, NY, USA},
  doi       = {10.1145/1080192.1080199},
}

@inproceedings{kash2012economics,
  author    = {Kash, Ian A. and Lai, John K. and Zhang, Haoqi and Zohar, Aviv},
  booktitle = {Proceedings of the 21st international conference on {World} {Wide} {Web}},
  title     = {Economics of {BitTorrent} communities},
  year      = {2012},
  address   = {New York, NY, USA},
  month     = apr,
  pages     = {221--230},
  publisher = {Association for Computing Machinery},
  series    = {{WWW} '12},
  doi       = {10.1145/2187836.2187867},
  isbn      = {9781450312295},
}

@inproceedings{kon2024spotproxy,
  author    = {Patrick Tser Jern Kon and Sina Kamali and Jinyu Pei and Diogo Barradas and Ang Chen and Micah Sherr and Moti Yung},
  booktitle = {33rd USENIX Security Symposium (USENIX Security 24)},
  title     = {{SpotProxy}: Rediscovering the Cloud for Censorship Circumvention},
  year      = {2024},
  address   = {Philadelphia, PA},
  month     = aug,
  pages     = {2653--2670},
  publisher = {USENIX Association},
  isbn      = {978-1-939133-44-1},
  url       = {https://www.usenix.org/conference/usenixsecurity24/presentation/kon},
}

@inproceedings{levin2008bittorrent,
  title      = {Bittorrent Is an Auction: Analyzing and Improving Bittorrent's Incentives},
  shorttitle = {Bittorrent Is an Auction},
  booktitle  = {Proceedings of the {{ACM SIGCOMM}} 2008 Conference on {{Data}} Communication},
  author     = {Levin, Dave and LaCurts, Katrina and Spring, Neil and Bhattacharjee, Bobby},
  year       = 2008,
  month      = aug,
  series     = {{{SIGCOMM}} '08},
  pages      = {243--254},
  publisher  = {Association for Computing Machinery},
  address    = {New York, NY, USA},
  doi        = {10.1145/1402958.1402987},
  isbn       = {978-1-60558-175-0},
}

@article{li2021optimal,
  author   = {Li, Yunpeng and Courcoubetis, Costas A. and Duan, Lingjie and Weber, Richard},
  journal  = {IEEE/ACM Transactions on Networking},
  title    = {Optimal {Pricing} for {Peer}-to-{Peer} {Sharing} {With} {Network} {Externalities}},
  year     = {2021},
  issn     = {1558-2566},
  month    = feb,
  number   = {1},
  pages    = {148--161},
  volume   = {29},
  doi      = {10.1109/TNET.2020.3029398},
  keywords = {Pricing, Peer-to-peer computing, Biological system modeling, Cost accounting, Wireless fidelity, Analytical models, Social networking (online), Peer-to-peer sharing, network externalities, pricing, two-sided market},
}

@inproceedings{maram2021candid,
  title      = {CanDID: Can-Do Decentralized Identity with Legacy Compatibility, Sybil-Resistance, and Accountability},
  shorttitle = {CanDID},
  booktitle  = {2021 IEEE Symposium on Security and Privacy (SP)},
  author     = {Maram, Deepak and Malvai, Harjasleen and Zhang, Fan and {Jean-Louis}, Nerla and Frolov, Alexander and Kell, Tyler and Lobban, Tyrone and Moy, Christine and Juels, Ari and Miller, Andrew},
  year       = 2021,
  month      = may,
  pages      = {1348--1366},
  issn       = {2375-1207},
  doi        = {10.1109/SP40001.2021.00038},
}

@inproceedings{maymounkov2002kademlia,
  author     = {Maymounkov, Petar and Mazières, David},
  booktitle  = {Peer-to-{Peer} {Systems}},
  title      = {Kademlia: {A} {Peer}-to-{Peer} {Information} {System} {Based} on the {XOR} {Metric}},
  year       = {2002},
  address    = {Berlin, Heidelberg},
  editor     = {Druschel, Peter and Kaashoek, Frans and Rowstron, Antony},
  month      = oct,
  pages      = {53--65},
  publisher  = {Springer},
  volume     = {2429},
  doi        = {10.1007/3-540-45748-8_5},
  isbn       = {9783540457480},
  keywords   = {Distribute Hash Table, Replacement Cache, Close Node, Close Leaf, Empty Bucket},
  language   = {en},
  shorttitle = {Kademlia},
}

@article{minniti2010turning,
  author    = {Minniti, Antonio and Vergari, Cecilia},
  journal   = {Information Economics and Policy},
  title     = {Turning Piracy into Profits: a Theoretical Investigation},
  year      = {2010},
  issn      = {0167-6245},
  month     = dec,
  number    = {4},
  pages     = {379--390},
  volume    = {22},
  doi       = {10.1016/j.infoecopol.2010.06.001},
  groups    = {Airdrops},
  publisher = {Elsevier BV},
}

@techreport{nist2024fips205,
  title       = {{FIPS 205}: Stateless Hash-Based Digital Signature Standard},
  author      = {{National Institute of Standards and Technology}},
  year        = {2024},
  institution = {U.S. Department of Commerce},
  url         = {https://doi.org/10.6028/NIST.FIPS.205},
}

@misc{ohlhaver2022decentralized,
  type       = {{{SSRN Scholarly Paper}}},
  title      = {Decentralized {{Society}}: {{Finding Web3}}'s {{Soul}}},
  shorttitle = {Decentralized {{Society}}},
  author     = {Ohlhaver, Puja and Weyl, E. Glen and Buterin, Vitalik},
  year       = 2022,
  month      = may,
  publisher  = {Social Science Research Network},
  doi        = {gp848s},
}

@misc{phala,
  howpublished = {\url{https://phala.com}},
  note         = {Accessed: 2025-01-19},
}

@inproceedings{rahman2010improving,
  author    = {Rahman, R. and Meulpolder, M. and Hales, D. and Pouwelse, J. and Epema, D. and Sips, H.},
  booktitle = {2010 IEEE International Conference on Communications},
  title     = {Improving {Efficiency} and {Fairness} in {P2P} {Systems} with {Effort}-{Based} {Incentives}},
  year      = {2010},
  month     = may,
  note      = {ISSN: 1938-1883},
  pages     = {1--5},
  doi       = {10.1109/ICC.2010.5502544},
  issn      = {1938-1883},
}

@misc{redinterview,
  howpublished = {\url{https://interviewfor.red}},
  note         = {Accessed: 2025-01-19},
}

@inproceedings{shih2024zk,
  author    = {Maurice Shih and
               Michael Rosenberg and
               Hari Kailad and
               Ian Miers},
  editor    = {Lujo Bauer and
               Giancarlo Pellegrino},
  title     = {zk-promises: Anonymous Moderation, Reputation, and Blocking from Anonymous
               Credentials with Callbacks},
  booktitle = {34th {USENIX} Security Symposium, {USENIX} Security 2025, Seattle,
               WA, USA, August 13-15, 2025},
  pages     = {4995--5014},
  publisher = {{USENIX} Association},
  year      = {2025},
  url       = {https://www.usenix.org/conference/usenixsecurity25/presentation/shih},
  timestamp = {Fri, 31 Oct 2025 16:17:35 +0100},
  biburl    = {https://dblp.org/rec/conf/uss/ShihRKM25.bib},
  bibsource = {dblp computer science bibliography, https://dblp.org},
}

@misc{unirep2025unirep,
  howpublished = {\url{https://developer.unirep.io}},
}

@misc{westhoff2007trent,
  title   = {Trent {{Reznor}} and {{Saul Williams Discuss Their New Collaboration}}, {{Mourn OiNK}}},
  author  = {Westhoff, Ben},
  year    = 2007,
  month   = oct,
  journal = {Vulture},
  url     = {https://www.vulture.com/2007/10/trent_reznor_and_saul_williams.html},
}

@inproceedings{wu2007proportional,
  title     = {Proportional Response Dynamics Leads to Market Equilibrium},
  booktitle = {Proceedings of the Thirty-Ninth Annual {{ACM}} Symposium on {{Theory}} of Computing},
  author    = {Wu, Fang and Zhang, Li},
  year      = 2007,
  month     = jun,
  series    = {{{STOC}} '07},
  pages     = {354--363},
  publisher = {Association for Computing Machinery},
  address   = {New York, NY, USA},
  doi       = {10.1145/1250790.1250844},
}

@misc{yaish2025inequality,
  author = {Yaish, Aviv and Chemaya, Nir and Cong, Lin William and Malkhi, Dahlia},
  title  = {Inequality in the {Age} of {Pseudonymity}},
  year   = {2025},
  doi    = {g9z6tc},
}

@inproceedings{zhang2016town,
  title      = {Town Crier: An Authenticated Data Feed for Smart Contracts},
  shorttitle = {Town Crier},
  booktitle  = {Proceedings of the 2016 ACM SIGSAC Conference on Computer and Communications Security},
  author     = {Zhang, Fan and Cecchetti, Ethan and Croman, Kyle and Juels, Ari and Shi, Elaine},
  year       = 2016,
  month      = oct,
  series     = {CCS '16},
  pages      = {270--282},
  publisher  = {Association for Computing Machinery},
  address    = {New York, NY, USA},
  doi        = {10.1145/2976749.2978326},
  isbn       = {978-1-4503-4139-4},
}

@inproceedings{zohar2009adding,
  author    = {Zohar, Aviv and Rosenschein, Jeffrey S.},
  booktitle = {Proceedings of {The} 8th {International} {Conference} on {Autonomous} {Agents} and {Multiagent} {Systems} - {Volume} 2},
  title     = {Adding incentives to file-sharing systems},
  year      = {2009},
  address   = {Richland, SC},
  publisher = {International Foundation for Autonomous Agents and Multiagent Systems},
  url       = {https://dl.acm.org/citation.cfm?id=1558131},
}
\end{filecontents}
\end{document}